\newcommand{\mdp}{\textit{MDP}}
\newcommand{\Aspace}{\ensuremath{\mathcal{A}}}
\newcommand{\Sspace}[1]{\ensuremath{\mathcal{S_{#1}}}}
\newcommand{\pdf}[1]{f_{#1}}
\newcommand{\Per}[1]{\textbf{$\mathcal{P}_{#1}$}}
\newcommand{\Exp}[1]{\textbf{$\mathcal{E}_{#1}$}}
\newcommand{\Uni}[2]{\textbf{$\mathcal{U}_{#1,#2}$}}
\newcommand{\Norm}[2]{\textbf{$\mathcal{N}_{#1,#2}$}}
\newcommand{\Q}{\ensuremath{\mathsf{QFlip}}} 
\newcommand{\phase}[1]{R_{#1}}
\newcommand{\A}{\textit{A}}
\newcommand{\lm}[1]{\ensuremath{\mathsf{LM_{#1}}}}
\newcommand{\LM}[1]{LM}
\newcommand{\cost}[1]{k_{#1}}
\newcommand{\localgain}[1]{\ensuremath{\Gamma_{#1}^{(\action{t},\state{t})}}}
\newcommand{\localben}[1]{\ensuremath{\beta_{#1}^{(\action{t},\state{t})}}}
\newcommand{\actualval}[1]{\ensuremath{V_{\state{#1},\action{#1}}}}
\newcommand{\state}[1]{\ensuremath{s_{#1}}}
\newcommand{\action}[1]{\ensuremath{a_{#1}}}
\newcommand{\rew}[1]{\ensuremath{r_{#1}}}
\newcommand{\oppLM}{\ensuremath{{\mathsf{oppLM}}}}
\newcommand{\ownLM}{\ensuremath{{\mathsf{ownLM}}}}
\newcommand{\composite}{\ensuremath{{\mathsf{composite}}}}
\newcommand{\myparagraph}[1]{\smallskip \noindent \textbf{#1.}}
\DeclareMathOperator*{\argmax}{argmax}
\newcommand\flipit{\ensuremath{\mathsf{FlipIt}}}
\newcommand{\ignore}[1]{}
\begin{document}
\title{\Q{}: An Adaptive Reinforcement Learning Strategy for the \flipit{} Security Game\thanks{will appear in \textit{10th Conference on Decision and Game Theory in Security}}
}
\titlerunning{\Q{}: An Adaptive Reinforcement Learning Strategy}
%

\author{Lisa Oakley and
Alina Oprea}

\authorrunning{}
%
\institute{Khoury College of Computer Sciences, Northeastern University, Boston MA, USA}
%
\maketitle
\begin{abstract}

A rise in Advanced Persistent Threats (APTs) has introduced a need for robustness against long-running, stealthy attacks which circumvent existing cryptographic security guarantees.
\flipit\ is a security game that models attacker-defender interactions in advanced scenarios such as APTs. Previous work analyzed extensively non-adaptive strategies in \flipit, but adaptive strategies rise naturally in practical interactions as players receive feedback during the game. We model the \flipit\ game as a Markov Decision Process and introduce \Q{}, an adaptive strategy for \flipit{} based on temporal difference reinforcement learning. We prove theoretical results on the convergence of our new strategy against an opponent playing with a Periodic strategy. We confirm our analysis experimentally by extensive evaluation of \Q{} against specific opponents. \Q{} converges to the optimal adaptive strategy for Periodic and Exponential opponents using associated state spaces.  Finally, we introduce a generalized \Q{} strategy with composite state space that outperforms a Greedy strategy for several distributions including Periodic and Uniform, without prior knowledge of the opponent's strategy. We also release an OpenAI Gym environment for \flipit{} to facilitate future research.



\keywords{Security games  \and \flipit{} \and Reinforcement learning \and Adaptive strategies \and Markov Decision Processes \and Online learning.}
\end{abstract} 
\section{Introduction}

Motivated by sophisticated cyber-attacks such as Advanced Persistent Threats (APT), the \flipit\ game was introduced by van Dijk et al. as a model of cyber-interactions in APT-like scenarios~\cite{FlipIt}. \flipit\ is a two-player cybersecurity game in which the attacker and defender contend for control of a sensitive resource (for instance a password, cryptographic key, computer system, or network). Compared to other game-theoretical models, \flipit\ has the unique characteristic of \emph{stealthiness}, meaning that players are not notified about the exact state of the resource during the game. Thus, players need to schedule  moves during the game with minimal information about the opponent's strategy. The challenge of determining the optimal strategy is in finding the best move times to take back resource control, while at the same time minimizing the overall number of moves (as players pay a cost upon moving). \flipit\ is a repeated, continuous game, in which players can move at any time and benefits are calculated according to the asymptotic control of the resource minus the move cost.


The original \flipit\ paper performed a detailed analysis of non-adaptive strategies in which players move according to a renewal process selected at the beginning of the game. Non-adaptive strategies are randomized, but do not benefit from feedback received during the game. In the real world, players  naturally get information about the game and the opponent's strategy as play progresses. For instance, if detailed logging and monitoring is performed in an organization, an attacker might determine the time of the last key rotation or machine refresh upon system takeover. van Dijk et al. defined \emph{adaptive strategies} that consider various amounts of information received during gameplay, such as the time since the last opponent move. However, analysis and experimentation in the adaptive case has remained largely unexplored. In a theoretical inspection, van Dijk et al. prove that the optimal Last Move adaptive strategy against Periodic and Exponential opponents is a Periodic strategy. They also introduce an adaptive Greedy strategy that selects moves to maximize local benefit. However, the Greedy strategy requires extensive prior knowledge about the opponent (the exact probability distribution of the renewal process), and does not always result in the optimal strategy~\cite{FlipIt}. Other extensions of \flipit\ analyzed modified versions of the game~\cite{FlipThem,FlipLeakage,TestMove,ThreePlayer1,ThreePlayer2}, but mostly considered non-adaptive strategies.

In this paper, we tackle the challenge of analyzing the two-player \flipit\ game with one adaptive player and one non-adaptive renewal player.  We limit the adaptive player's knowledge  to the opponent's last move time and show how this version of the game can be modeled as an agent interacting with a Markov Decision Process (MDP). We then propose for the first time the use of temporal difference reinforcement learning for designing adaptive strategies in the \flipit\ game. We introduce \Q{}, a Q-Learning based adaptive strategy that plays the game by leveraging information about the opponent's last move times. We explore in this context the instantiation of various reward and state options to maximize \Q{}'s benefit against a variety of opponents.

We start our analysis by considering an opponent playing with the Periodic with random phase strategy, also studied by \cite{FlipIt}. We demonstrate for this case that \Q{} with states based on the time since opponent's last move converges to the optimal adaptive strategy (playing immediately after the opponent with the same period). We provide a theoretical analysis of the convergence of \Q{} against this Periodic opponent. Additionally, we perform detailed experiments in the OpenAI Gym framework, demonstrating fast convergence for a range of parameters determining the exploration strategy and learning decay. Next, we perform an analysis of \Q{} against an Exponential opponent, for which van Dijk et al. determined the optimal strategy~\cite{FlipIt}. We show experimentally that \Q{} with states based on the player's own move converges to the optimal strategy and the time to convergence depends largely on the adaptive player's move cost and the Exponential player's distribution parameters. Finally, we propose a generalized, composite \Q{} instantiation that uses as state the time since last moves for both players. We show that composite \Q{} converges to the optimal strategy for Periodic and Exponential. Remarkably, \Q{} has no prior information about the opponent strategy at the beginning of the game, and most of the time outperforms the Greedy algorithm (which leverages information about the opponent strategy). For instance, \Q{} achieves average benefit between 5\% and 50\% better than Greedy against Periodic and 15\% better than Greedy against a Uniform player.


The implications of our findings are that reinforcement learning is a promising avenue for designing optimal learning-based strategies in cybersecurity games. Practically, our results also reveal that protecting systems against adaptive adversaries is a difficult task and defenders need to become adaptive and agile in face of advanced attackers. To summarize, our contributions in the paper are:

\begin{itemize}
\item We model the \flipit\ game with an adaptive player competing against a renewal opponent as an MDP.
\item We propose \Q{}, a versatile generalized Q-Learning based adaptive strategy for \flipit\ that does not require prior information about the opponent strategy.
\item We prove \Q{} converges to the optimal strategy against a Periodic opponent.
\item We demonstrate experimentally that \Q{} converges to the optimal strategy and outperforms the Greedy strategy for a range of opponent strategies.
\item We release an OpenAI Gym environment for \flipit{} to aid future researchers.
\end{itemize}

\paragraph{Paper organization.} We start with surveying the related work in Section~\ref{sec:related}. Then we introduce the \flipit\ game in Section~\ref{sec:background} and describe our MDP modeling of \flipit\ and the \Q{} strategy in Section~\ref{sec:strategy}. We analyze \Q{} against a Periodic opponent theoretically in Section~\ref{sec:qperiodic}. We perform experimental evaluation of Periodic and Exponential strategies in Section~\ref{sec:qexp}. We evaluate generalized composite \Q{} against four distributions in Section~\ref{sec:general}, and conclude in Section~\ref{sec:conclusions}.

\section{Related Work}
\label{sec:related}
\flipit{}, introduced by van Dijk et al.~\cite{FlipIt}, is a non-zero-sum cybersecurity game where two players compete for control over a shared resource. The game distinguishes itself by its stealthy nature, as moves are not immediately revealed to players during the game. Finding an optimal (or dominant) strategy in \flipit\ implies that a player can schedule its defensive (or attack) actions most effectively against stealthy opponents. van Dijk et al. proposed multiple non-adaptive  strategies and proved results about their strongly dominant opponents and Nash Equilibria~\cite{FlipIt}. They also introduce the Greedy adaptive strategy and show that it results in a dominant strategy against Periodic and Exponential players, but it is not always optimal. The original paper left many open questions about designing general adaptive strategies for \flipit. van Dijk et al.~\cite{FlipItApp} analyzed the applications of the game in real-world scenarios such as password and key management.


Additionally, several \flipit{} extensions have been proposed and analyzed.
These extensions focus on modifying the game itself, adding additional players~\cite{ThreePlayer1,ThreePlayer2}, resources~\cite{FlipThem}, and move types~\cite{TestMove}. FlipLeakage considers a version of \flipit\  in which information leakage is gradual and ownership of resource is obtained incrementally~\cite{FlipLeakage}. Zhang et al. consider limited resources with an upper bound on the frequency of moves and analyze Nash Equilibria in this setting~\cite{ResourceConstraint}.  Several games study human defenders players against automated attackers using Periodic strategies~\cite{Human1,Human2,Human3}. All of this work uses exclusively non-adaptive players, often limiting analysis to solely opponents playing periodically. The only previous work that considers adaptive strategies is by Laszka et al.~\cite{Noncovert1,Noncovert2}, but in a modification of the original game with non-stealthy defenders. \Q{} can generalize to play adaptively in these extensions, which we leave to future work.

Reinforcement learning (RL) is an area of machine learning in which an agent takes action on an environment, receiving feedback in the form of a numerical reward, and adapting its action policy over time in order to maximize its cumulative reward. Traditional methods are based primarily on Monte Carlo and temporal difference Q-Learning \cite{Sutton}. Recently, approximate methods based on deep neural networks have proved effective at complex games such as Backgammon, Atari and AlphaGo~\cite{TDGammon,Atari,AlphaGo}.

RL has emerged in security games in recent years. Han et al. use RL for adaptive cyber-defense in a Software-Defined Networking setting and consider adversarial poisoning attacks against the RL training process~\cite{SDN}. Hu et al. proposes the idea of using Q-Learning as a defensive strategy in a cybersecurity game for detecting APT attacks in IoT systems~\cite{IoT}. Motivated by HeartBleed, Zhu et al. consider an attacker-defender model in which both parties synchronously adjust their actions, with limited information on their opponent~\cite{Heartbleed}.  Other RL applications include network security~\cite{NetworkSecurity1},  spatial security games~\cite{Spatial}, security monitoring~\cite{Monitor}, and crowdsensing~\cite{Crowd}. Markov modeling for moving target defense has also been proposed~\cite{MTD1,MTD2}.

To the best of our knowledge, our work presents the first application of RL to stealthy security games, resulting in the most effective adaptive \flipit{} strategy. 
\section{Background on the \flipit{} Game}
\label{sec:background}

\flipit\ is a two-player game introduced by van Dijk et al. to model APT-like scenarios~\cite{FlipIt}. In \flipit, players move at any time to take control of a resource. In practice, the resource might correspond to a password, cryptographic key, or computer system that both attacker and defender wish to control. Upon moving, players pay a move cost (different for each player). Success in the game is measured by player benefit, defined as the asymptotic amount of resource control (gain) minus the total move cost as described in Figure \ref{fig:notation}. The game is infinite, and we consider a discrete version of the game in which players can move at discrete time ticks. Figure \ref{fig:flipit} shows an example of the \flipit\ game, and Figure \ref{fig:notation} provides relevant notation we will use in the paper.

\begin{figure}
    \centering
    \includegraphics[width=.85\textwidth]{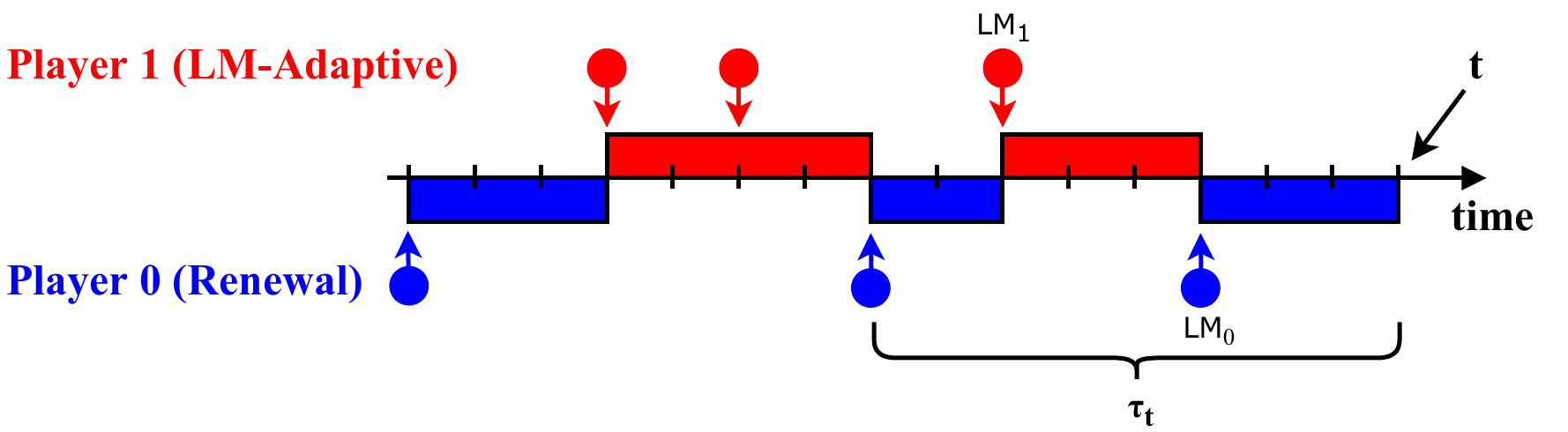}
    \caption{Example of \flipit{} game between Last Move adaptive Player 1 and Player 0 using a Renewal strategy. Rounded arrows indicate player moves. The first move of Player 1 is \emph{flipping}, and the second move is \emph{consecutive}. $\tau_t$ is the time since Player 0's last \textit{known} move at time $t$ and \lm{i} is Player $i$'s \textit{actual} last move at time $t$. Due to the stealthy nature of the game, $\tau_t\geq t-\lm{0}$.}
    \label{fig:flipit}
\end{figure}

\begin{figure}[t]
    \captionsetup[subfigure]{labelformat=empty}
        \centering
        \subfloat[]{\adjustbox{width=0.6\columnwidth,valign=B,raise=2\baselineskip}{%
        \begin{tabular}{|c|l|}
            \hline
            Symbol     & Description                                                                                                       \\   \hline
            $t$ & Time step (tick) \\
            $\cost{i}$ & Player $i$'s move cost                                                                                        \\
            $\Gamma_i$ & Player $i$'s total gain    (time in control)                                                                \\
            $n_i$ & Player $i$'s total moves \\
            $\beta_i$ & Player $i$'s total benefit.     $\beta_i=\Gamma_i-\cost{i} \cdot n_i$ \\
            $\tau_{t}$ & Time since opponent's last     known move at time $t$ \\
            $\lm{i}$ & Player $i$'s actual last     move time at time $t$ \\ 
            $\rho$ & Player 0's average move time \\\hline
            \end{tabular}}}
    \qquad
    \subfloat[]{\adjustbox{width=0.325\columnwidth,valign=B,raise=2\baselineskip}{%
        \begin{tabular}{|c|l|}
            \hline
            Symbol     & Description                                                                                                       \\   \hline
            
            \state{t} & Observation \\
            \action{t} & Action \\
            \rew{t} & Reward \\
            $\gamma$ & Future discount                               \\
            $\alpha$ & Count of \action{t} in \state{t}                                                               \\
            $\epsilon$ & Exploration parameter                                                               \\
            $d$ & Exploration discount                                                               \\
            $p$ & New move probability \\
             \hline
            \end{tabular}}}
    \setlength{\abovecaptionskip}{-15pt}
    \caption{\flipit{} notation (left) and \Q{} notation (right)}\label{fig:notation}
\end{figure}

An  interesting aspect of \flipit\ is that the players do not automatically learn the opponent's moves in the game. In other words, moves are stealthy, and players need to move without knowing the state of the resource. There are two main classes of strategies defined for \flipit:

\paragraph{Non-adaptive Strategies.} Here, players do not receive any  feedback upon moving. Non-adaptive strategies are determined at the beginning of the game, but they might employ randomization to select the exact move times. \emph{Renewal strategies} are non-adaptive strategies that generate the intervals between consecutive moves according to a renewal process. The inter-arrival times between moves are independent and identically distributed random variables chosen from a probability density function (PDF). Examples of renewal strategies include:

\begin{itemize}
\item Periodic with random phase (\Per{\delta}):  The player first moves uniformly at random with phase $\phase{\delta}\in(0,\delta)$, with each subsequent move occurring periodically, i.e., exactly at $\delta$ time units after the previous move.

\item Exponential: The inter-arrival time is distributed according to an exponential (memoryless) distribution $\Exp{\lambda}$ with rate $\lambda$. The probability density function for \Exp{\lambda} is $\pdf{\Exp{\lambda}}(x)= \lambda e^{-\lambda x}$, for $x>0$, and 0 otherwise.

\item Uniform: The inter-arrival time is distributed according to an uniform distribution \Uni{\delta}{u} with parameters $\delta$ and $u$. The probability density function for \Uni{\delta}{u} is
$\pdf{\Uni{\delta}{u}}(x)=1/u$, for  $x \in [\delta-u/2,\delta+u/2]$, and 0 otherwise.

\item Normal: The inter-arrival time is distributed according to a normal distribution \Norm{\mu}{\sigma} with mean $\mu$ and standard deviation $\sigma$. The probability density function for \Norm{\mu}{\sigma} is $\pdf{ \Norm{\mu}{\sigma}}(x)=\frac{1}{\sqrt{2 \pi \sigma^2}}e^{-(x-\mu)^2/2\sigma^2}$, for $x \in R$.
\end{itemize}

\paragraph{Adaptive Strategies.} In these strategies, players receive feedback during the game and can adaptively change their subsequent moves. In \emph{Last Move (LM)} strategies, players receive information about the opponent's last move upon moving in the game. This is the most restrictive and therefore most challenging subset of adaptive players, so we only focus on LM adaptive strategies here.

Theoretical analysis of the optimal LM strategy against specific Renewal strategies has been shown~\cite{FlipIt}. For the Periodic strategy, the optimal LM strategy is to move right after the Periodic player (whose moves can be determined from the LM feedback received during the game). The memoryless property of the exponential distribution implies that the probability of moving at any time is independent of the time elapsed since the last player's move. Thus, an LM player that knows the Exponential opponent's last move time has no advantage over a non-adaptive player. Accordingly, the dominant LM strategy against an Exponential player is still a Periodic strategy, with the period depending on the player's move cost and the rate of the Exponential opponent.



\paragraph{Greedy Strategy.} To the best of our knowledge, the only existing adaptive strategy against general Renewal players is the ``Greedy'' strategy~\cite{FlipIt}. Greedy calculates the ``local benefit'', $L(z)$ of a given move time, $z$, as:

\begin{equation}
L(z) = \frac{1}{z} \Bigl[\int_{x=0}^z x \hat{f_0}(x) dx + z \int_{z}^{\infty}\hat{f}_0(x)dx - k_1 \Bigr],
\label{eqn:localben}
\end{equation}

\noindent where $\hat{f_0}(x) = f_0(\tau+x)/(1-F_0(\tau))$, $f_0$ is the probability density function (PDF) of the opponent's strategy, $F_0$ is the corresponding cumulative density function (CDF), and $\tau$ is the interval since the opponent's last move. Greedy finds the move time, $\hat{z}$, which maximizes this local benefit, and schedules a move at $\hat{z}$ if the maximum local benefit is positive. In contrast, if the local benefit is negative, Greedy chooses not to move, dropping out of the game.


Although the Greedy strategy is able to compete with any Renewal strategy, it is dependent on prior knowledge of the opponent's strategy. van Dijk et al. showed that Greedy can play optimally against Periodic and Exponential players~\cite{FlipIt}. However, they showed a strategy for which Greedy is not optimal. This motivates us to look into other general adaptive strategies that \emph{achieve higher benefit than Greedy} and \emph{require less knowledge about the opponent's strategy}.

\section{New Adaptive Strategy for \flipit}\label{sec:strategy}


Our main insight is to apply traditional reinforcement learning (RL) strategies to the \flipit\ security game to create a Last Move adaptive strategy that outperforms existing adaptive strategies. We find that modeling \flipit\ as a \textit{Markov Decision Process} (\mdp{}) and defining an \LM{} Q-Learning strategy is non-trivial, as the stealthy nature of the game resists learning. We consider the most challenging setting, in which the adaptive player has no prior knowledge on the opponent's strategy. In this section, we present \Q{}, a strategy which is able to overcome those challenges and elegantly compete against any Renewal opponent.

\subsection{Modeling \flipit{} as an MDP}\label{sec:MDP}

Correctly modeling the game of two-player \flipit\ as an \mdp{} is as important to our strategy's success as the RL algorithm itself. In our model, player 1 is an \textit{agent} interacting with an \textit{environment} defined by the control history of the \flipit\ resource as depicted in Figure \ref{fig:mdp}.
\begin{figure}[tb]
    \centering
    \includegraphics[width=.9\textwidth]{./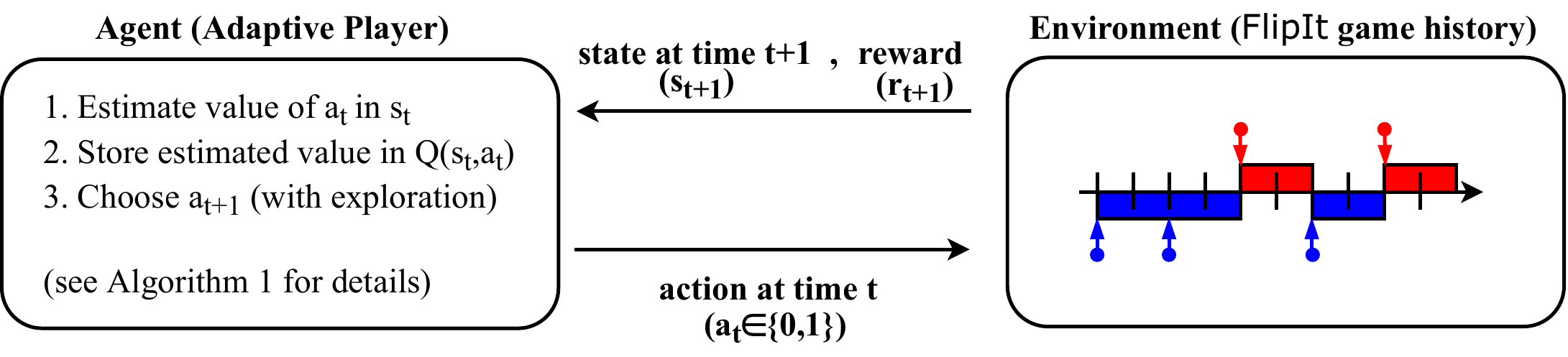}
    \caption{Modeling \flipit\ as an MDP.}
    \label{fig:mdp}
  \end{figure}

We consider the infinite but discrete version of \flipit\, and say that at every time step (tick), $t\in\{1,2,\dots\}$, the game is in some state $\state{t}\in\Sspace{}$ where \Sspace{} is a set of observed state values dependent on the history of the environment. At each time $t$, the agent chooses an action $\action{t} \in\Aspace{}=\{0,1\}$ where $0$ indicates waiting, and $1$ indicates moving. The environment updates accordingly and sends the agent state $\state{t+1}$ and reward $\rew{t+1}$ defined in Table \ref{fig:state-types} and Equation (\ref{eqn:rew}), respectively.

Defining optimal state values and reward functions is essential to generating an effective RL algorithm. In a stealthy game with an unknown opponent, this is a non-trivial task that we will investigate in the following paragraphs.

\myparagraph{Modeling State}\label{sec:state}
At each time step $t$, the \LM{} player knows two main pieces of information: its own last move time ($\lm{1}$), and the time since the opponent's last \textit{known} move ($\tau_t$). The observed state can therefore depend on one or both of these values. We define three observation schemes in Table \ref{fig:state-types}. We compare these observation schemes against various opponents in the following sections.

\begin{table}[tb]
    \centering
    \caption{Observation Schemes}\label{tab1}
    \begin{tabular}{|l|l|l|}
    \hline
    Scheme&\state{0}&\state{t} for $t>0$\\
    \hline
    \oppLM{}&$-1$&$\tau_t$ if $\lm{1}>$ player 0 first move, otherwise $-1$\\
    \ownLM{}&$0$&$t-\lm{1}$\\
    \composite{}&$(\state{0}^{\ownLM{}},\state{0}^{\oppLM{}})$&$(\state{t}^{\ownLM{}},\state{t}^{\oppLM{}})$\\
    \hline
    \end{tabular}
    \label{fig:state-types}
  \end{table} 

\myparagraph{Modeling Reward}\label{sec:rew}
Temporal difference learning algorithms leverage incremental updates from rewards at each time tick, and therefore require the environment to transmit meaningful reward values. A good reward should be flexible, independent of prior knowledge of opponent strategy, and most importantly promote the ultimate goal of achieving high benefit. We divide actions into three resulting categories: \textit{flipping}, \textit{consecutive}, and \textit{no-play} based on the type of action and the state of the environment as depicted in Table \ref{fig:move-types}. 


\begin{table}[tb]
  \centering
  \caption{Player 1 Action Categories}\label{tab1}
  \begin{tabular}{|l|l|l|l|l|l|}
  \hline
   Move type & \action{t} & Env State & Cost  & Outcome & Explanation\\
  \hline
  \textit{flipping}&1&$\lm{0}>\lm{1}$&$-\cost{1}$&$\tau_{t+1} = t - \lm{0} + 1$&Player 1 takes control\\
  \textit{consecutive}&1&$\lm{0}\leq \lm{1}$&$-\cost{1}$&$\tau_{t+1}=\tau_{t}+1$&Player 1 moves while in control\\
  \textit{no-play}&0&any&0&$\tau_{t+1}=\tau_{t}+1$&Player 1 takes no action\\
  \hline
  \end{tabular}\label{fig:move-types}
\end{table}

The most straightforward reward after each action is a \textit{resulting benefit}
\begin{equation}
  \localben{1}=\localgain{1}-\cost{1}
\end{equation}
where \localgain{1} is the \textit{resulting gain}, or Player 1's additional time in control between time $t$ and the opponent's next move as a result of taking action \action{t} in state \state{t}. These rewards would sum to equal Player 1's total benefit over the course of the game, therefore exactly matching the goal of maximized benefit.

For \textit{consecutive} moves,
$\localben{1}=-\cost{1}$, as Player 1 is already in control, and therefore attains no additional gain, resulting in a wasted move.
For \textit{no-plays}, $\localben{1}=0$, as not moving guarantees no additional gain and incurs no cost.

The main challenge here comes from determining reward for \textit{flipping} moves. Consider the case where the opponent plays more than once between two of the agent’s moves. Here, it is impossible to calculate an accurate \localben{1}, as the agent cannot determine the exact time they lost control. Moreover, there is no way to calculate future gain from any move against a randomized opponent, as the opponent's next move time is unknown.

We acknowledge a few ineffective responses to these challenges. The first rewards Player 1 for playing soon after the opponent (higher reward for lower resulting $\tau_{t+1}$ values). This works against a Periodic opponent, but not work against an Exponential opponent as it does not reward optimal play. Another approach is a reward based on prior gain, rather than resulting gain. This is difficult to calculate and rewards previous moves, rather than the current action.

We determined experimentally that the best reward for \action{t} against an unknown opponent is a fixed constant related to the opponent's move frequency as follows
\begin{equation}
  \rew{t+1} =
    \begin{cases}
      0 & \text{if \textit{no-play} at time }  t \\
      -\cost{1} & \text{if \textit{consecutive} move at time } t\\
      \frac{\rho-\cost{1}}{c} & \text{if \textit{flipping} move at time } t
   \end{cases}\label{eqn:rew}
\end{equation}
\noindent where $\rho$ is an estimate of Player 0's average move frequency and $c$ is a constant determined before gameplay (for normalization).
Playing often toward the beginning of the game and keeping track of the observed move times can provide a rough estimate of $\rho$. This reward proves highly effective, while maintaining the flexibility to play against any opponent without any details of their strategy.

\subsection{The \Q{} Strategy}\label{sec:Q}
In this section we present a new, highly effective \LM{} adaptive strategy, \Q{}, based on existing temporal difference reinforcement learning techniques. \Q{} plays within our \flipit{} model from Section \ref{sec:MDP}. Though optimized to play against Renewal opponents, a \Q{} player can compete against any player, including other adaptive opponents. To the best of our knowledge, \Q{} is the first adaptive strategy which can play \flipit{} against both Renewal and non-Renewal opponents without any prior knowledge about their strategy.

\myparagraph{Value Estimation}
\Q{} uses feedback attained from the environment after each move and the information gathered during gameplay to estimate the value of action \action{t} in state \state{t}. Player 1 has no prior knowledge of the opponent's strategy, therefore must learn an optimal strategy in real-time. We adopt an on-line temporal difference model of value estimation where $Q(\state{t},\action{t})$ is the expected value of taking action \action{t} in state \state{t} as in~\cite{Sutton}.

We start by defining the actual value of an action \action{t} in state \state{t} as a combination of the immediate reward and potential future value as
\begin{equation}
    \actualval{t}=\rew{t+1} + \gamma\cdot \max_{\action{}'\in \Aspace{}}Q(\state{t+1},\action{}').\label{eqn:v}
\end{equation}
where $0\leq\gamma\leq 1$ is a constant discount to the estimated future value, and \rew{t+1} is the environment-provided reward from Equation (\ref{eqn:rew}).

After each tick, we update our value estimate by a discounted difference between estimated move value and actual move value as follows:
\begin{equation}
    Q_{\alpha+1}(\state{t},\action{t}) = Q_{\alpha}(\state{t},\action{t}) + \frac{1}{\alpha+1}(\actualval{t}- Q_{\alpha}(\state{t},\action{t}))\label{eqn:estimate}
\end{equation}where $\alpha$ is the number of times action $a$ has been performed in state $s$, and $1/\alpha$ is the \textit{step-size} parameter, which discounts the change in estimate proportionally to the number of times this estimate has been modified.

This update policy uses the estimate error to step toward the optimal value estimate at each state.  Note that, if $\gamma=0$, we play with no consideration of the future, and $Q_\alpha(s,a)$ is just an average of the environment-provided rewards over all times action \action{} was performed in state \state{}. 

\myparagraph{Action Choice (Exploration)}
A key element of any reinforcement learning algorithm is balancing exploitation of learned  value estimation with exploration of new states and actions to avoid getting stuck in local maxima. We employ a modified \textit{decaying-}$\epsilon$\textit{-greedy} exploration strategy from~\cite{Sutton} as

\begin{equation}
  \action{t} =
  \begin{cases}
    \text{choose uniformly at random from }\Aspace{}\text{ with probability }\epsilon'\\
    \argmax_{a\in\Aspace{}} Q(\state{t},a)\text{ with probability }1-\epsilon'\label{eqn:epsgreedy}
  \end{cases}
\end{equation}
where $\epsilon'=\epsilon\cdot e^{-d\cdot v}$ for constant exploration and decay parameters $0\leq\epsilon,d<1$ and $v$ equal to the number of times \Q{} has visited state $\state{t}$. If $Q(\state{t},0)=Q(\state{t},1)$, we choose $\action{t}=0$ with probability $p$, and $\action{t}=1$ with probability $1-p$.

\myparagraph{Algorithm Definition} We then define the agent's policy in Algorithm \ref{alg:q-policy}. This is a temporal difference Q-Learning based algorithm. The algorithm first estimates the opponent move rate, $\rho$, by playing several times. This step is important to determine if it should continue playing or drop out (when $\cost{1} \ge \rho$), and to fix the environment reward. If the agent decides to play, it proceeds to initialize the Q table of estimated rewards in each state and action to 0.   The agent's initial state is set according to Table \ref{fig:state-types}. The action choice is based on exploration, as previously discussed. Once an action is selected, the agent receives the reward and new state from the environment and updates Q according to Equation (\ref{eqn:estimate}).


\begin{algorithm}
  \caption{}
	\label{alg:q-policy}
  \begin{algorithmic}[1]
    \State Estimate rate of play $\rho$ of opponent and drop out if $\cost{1} \ge \rho$
    \State Initialize 2D table Q with all zeros
    \State Initialize $\state{0}$ according to observation type (see section \ref{sec:state})
    \For{$t\in\{1,2,\dots\}}$
      \If{Q(\state{t},0)=Q(\state{t},1)}
        \State $\action{t} \gets 0$ with probability $p$, else $\action{t} \gets 1$
      \Else
        \State Choose action $\action{t}$ according to equation (\ref{eqn:epsgreedy})
      \EndIf
      \State Simulate action $\action{t}$ on environment, and observe \state{t+1}, \rew{t+1}
      \State Update Q(\state{t},\action{t}) according to equation (\ref{eqn:estimate})
    \EndFor
  \end{algorithmic}
\end{algorithm} 

\section{Theoretical Analysis for Periodic Opponent}
\label{sec:qperiodic}

We consider first an opponent playing periodically with random phase. Previous work has focused primarily on analyzing \Per{\delta} strategies in a non-adaptive context~\cite{FlipIt,Noncovert1,Noncovert2,ThreePlayer2}.
We first show theoretically that \Q{} eventually learns to play optimally against a Periodic opponent when the future discount $\gamma$ is set at 0.
We employ this restriction because, when $\gamma>0$, the actual value of \action{t} in \state{t} (\actualval{t}) depends on the maximum estimated value in \state{t+1}, which changes concurrently with $Q(\state{t},\action{t})$.
Additionally, Section \ref{sec:per_experiment} shows experimentally that changing $\gamma$ does not have much effect on benefit.

In the discrete version of \flipit\, a player using the Periodic strategy \Per{\delta} plays first at some uniformly random $\phase{\delta}\in\{0,\dots,\delta\}$, then plays every $\delta$ ticks for all subsequent moves. In this case, the optimal \LM{} strategy is to play immediately after the opponent. Our main result is the following theorem, showing that \Q{} converges to the optimal strategy with high probability.

\begin{theorem}
    Playing against a \Per{\delta} opponent with $\gamma=0$, $\cost{1}<\delta$, \Q{} using the \ownLM{} observation scheme converges to the optimal LM strategy as $t\rightarrow\infty$.
    \label{thm:opt-per_optimal}
\end{theorem}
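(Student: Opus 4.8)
The plan is to exploit the simplification $\gamma=0$ to reduce the whole analysis to a statement about sample means. First I would observe that with $\gamma=0$ the target in Equation~(\ref{eqn:v}) is just $\actualval{t}=\rew{t+1}$, so the update~(\ref{eqn:estimate}) is exactly incremental averaging: $Q_\alpha(s,a)$ is the arithmetic mean of the rewards collected on visits to $(s,a)$. In particular $Q(s,0)=0$ for every state, and by the drop-out test in Algorithm~\ref{alg:q-policy} together with $\cost{1}<\delta$ (so $\cost{1}<\rho$ once $\rho$ is estimated near $\delta$) the agent keeps playing. Hence deciding the greedy action in a state $s$ reduces to the sign of $Q(s,1)$, and each move contributes either $\frac{\rho-\cost{1}}{c}>0$ (flipping) or $-\cost{1}<0$ (consecutive) to that average.

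Next I would establish two structural facts that pin down the target policy. (i) In the \ownLM{} scheme the state is $s=t-\lm{1}$, so a move at state $s\ge\delta$ is taken at least $\delta$ ticks after the agent's previous move; since the opponent has period $\delta$, that interval contains an opponent move, so $\lm{0}>\lm{1}$ and the move is necessarily flipping. Thus $Q(s,1)\to\frac{\delta-\cost{1}}{c}>0=Q(s,0)$ for all $s\ge\delta$, and the greedy policy eventually moves no later than state $\delta$, capping the inter-move gap at $\delta$. (ii) Conversely, a move taken strictly before the opponent's next move finds $\lm{0}\le\lm{1}$, is consecutive, and contributes $-\cost{1}<0$, so greedy eventually waits in such states. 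Together these force any stable greedy policy to move exactly once per opponent period, synchronized with the opponent --- precisely the optimal LM strategy identified in Section~\ref{sec:background}.

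The crux, and the step I expect to be hardest, is the phase argument, because \ownLM{} never exposes the opponent's move times and the flipping reward is a phase-independent constant, so nothing in the reward directly prefers the optimal offset. I would introduce the relative phase $\phi$ between the agent's (eventually periodic) moves and the opponent's, and show that the smallest state whose average reward has turned positive is exactly $\delta-\phi$; moving there lands the agent on the opponent's next move and drives $\phi$ to its optimal value, which is then a fixed point of the induced dynamics. At that phase, states $1,\dots,\delta-1$ are consecutive (negative $Q$) and only state $\delta$ is flipping, so the greedy policy is the period-$\delta$ optimal play and it reproduces the statistics that justify it, closing the self-consistency loop. Care is needed here with the discrete tie-break for a move made on the same tick as the opponent.

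Finally I would handle exploration and the ``high probability'' qualifier. Because $\epsilon'=\epsilon\,e^{-d\,v}$ is summable in the visit count $v$, the exploratory moves in each state are almost surely finite, so after some finite time the agent plays purely greedily; the argument must therefore show that, with high probability, the finitely many exploratory moves occurring while $Q\equiv 0$ (ties broken by $p$) suffice to make $Q(\delta,1)$ positive and to fix the correct signs on the lower states before exploration dies out, after which the greedy policy of the preceding steps is optimal and self-sustaining. The main obstacle throughout is this circular dependence between the policy and the reward statistics via the phase, compounded by the vanishing-exploration regime and the boundary tie-break at simultaneous moves.
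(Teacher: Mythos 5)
There is a genuine gap, and it sits exactly where you predicted: the phase argument cannot be closed, because in the \ownLM{} scheme the flipping reward $\frac{\rho-\cost{1}}{c}$ is a constant, independent of how soon after the opponent's move the flip lands. Consequently, period-$\delta$ play at \emph{any} offset $\phi$ from the opponent is a self-consistent fixed point of your dynamics: every move made at state $\delta$ is then flipping (your fact (i)), earns the same positive constant, and leaves $\phi$ unchanged, while states $s<\delta$ keep whatever finitely many samples they accumulated before exploration decayed. The Q-table produced by locking at offset $\phi=\delta/2$ is statistically indistinguishable from the one produced at the optimal offset, so your claim that the induced dynamics ``drive $\phi$ to its optimal value'' fails --- the fixed point is not unique, and with positive probability the agent locks into a suboptimal phase forever, where its benefit is roughly $(\delta-\phi-\cost{1})/\delta$ rather than optimal. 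Your fact (ii) also has a flaw specific to \ownLM{}: ``a move taken strictly before the opponent's next move'' is not an event determined by the state $s=t-\lm{1}$; the same state $s<\delta$ hosts flipping moves on some visits and consecutive moves on others depending on the hidden phase, so $Q(s,1)$ is a mixture whose sign depends on history (for instance with $\delta=50$, $\cost{1}=25$, $c=5$, the long-run mixture at states just below $\delta$ is strictly positive, so greedy need not ``eventually wait'' there). In short, the statement is not provable --- and is arguably false --- as literally written for \ownLM{}.

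The paper does not face this obstruction because its proof is actually carried out in the \oppLM{} scheme; the ``\ownLM{}'' in the theorem statement is a typo (see Figure \ref{fig:states}, the initial state $-1$, and the flipping transition $\state{t+1}=t-\lm{0}+1$ in Lemma \ref{lem:q_state_lt_delta}, all of which exist only in \oppLM{}, and Section \ref{sec:per_experiment}, which cites this theorem as a result about \oppLM{}). With $\state{t}=\tau_t$, the state \emph{does} determine the move type deterministically against \Per{\delta}: a move in a state $1\le s\le\delta$ is consecutive (reward $-\cost{1}$) and a move in a state $s\ge\delta+1$ is flipping (reward $\frac{\delta-\cost{1}}{c}>0$), so with $\gamma=0$ every $Q(s,1)$ has its final, correct sign after a single sample, and ``play at state $\delta+1$'' literally is the optimal strategy of moving right after the opponent. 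What then remains --- and what your sketch has no analogue of --- is the reachability part: the paper's Lemmas \ref{lem:q_state_lt_delta}--\ref{lem:q_flip} use the tie-break probability $p$ and the decaying $\epsilon'$ to show that state $\delta+1$ is visited infinitely often and is eventually played at each visit. Your first paragraph (with $\gamma=0$ the update is incremental averaging, $Q(s,0)=0$, and greedy is decided by the sign of $Q(s,1)$) is correct and matches the paper; to repair the argument, restate everything for \oppLM{} and replace the phase analysis with those reachability lemmas.
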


We will prove this theorem by first showing that \Q{} visits state $\delta+1$ infinitely often, then claiming that \Q{} will eventually play once in state $\delta+1$. We conclude by proving \Q{} eventually learns to play in state $\delta+1$ and no other state. This is exactly the optimal strategy of playing right after the opponent. Because the \Per{\delta} strategy is deterministic after the random phase, we can model Player 1's known state and transitions according to the actual state of the game as in Figure \ref{fig:states}. We prove several lemmas and finally the main theorem below.

\begin{figure}
  \centering
  \includegraphics[width=\textwidth]{./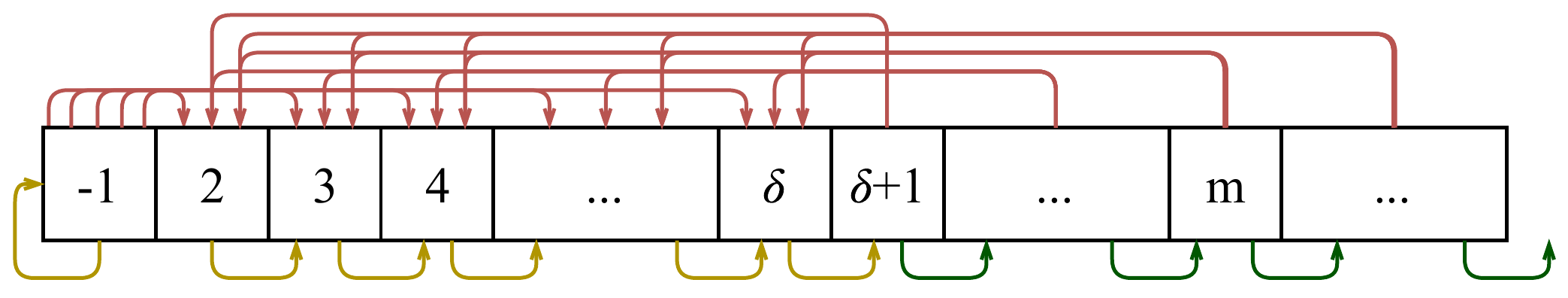}
  \caption{\Q{} using \oppLM{} states against \Per{\delta}. Arrows indicate state transitions. Red is $\action{t}=1$, green is $\action{t}=0$, and yellow is either $\action{t}=0$ or $\action{t}=1$.}\label{fig:states}
\end{figure}

\begin{lemma}
  If $t>\phase{\delta}$ and $\action{t}=1$, \Q{} will visit state $\delta+1$ in at most $\delta$ additional time steps.\label{lem:q_state_lt_delta}
\end{lemma}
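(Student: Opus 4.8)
The plan is to read the dynamics of \Q{}'s observed state straight off the transition structure in Figure~\ref{fig:states}, using that against a \Per{\delta} opponent past its random phase $\phase{\delta}$ the state is effectively a deterministic counter. First I would record the one-step rule from Table~\ref{fig:move-types}: a \emph{no-play} and a \emph{consecutive} move both advance the counter $\tau_{t}$ to $\tau_{t}+1$, and the only event that resets it downward is a \emph{flipping} move, after which it is set to $t-\lm{0}+1$. Since $\action{t}=1$ and $t>\phase{\delta}$, we start at the tick right after a move, and a quick check shows the counter then sits at a value at most $\delta+1$ (a flip resets it into $\{1,\dots,\delta\}$ because $\lm{0}$ lies within the last $\delta$ ticks, and a consecutive move can raise an in-control value of at most $\delta$ by one).

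The heart of the argument is that, from this base, the counter climbs by exactly one on every subsequent tick and cannot be reset before it reaches $\delta+1$. The structural reason is that a reset requires a flipping move, which in turn requires $\lm{0}>\lm{1}$, i.e. the opponent to have retaken control since \Q{}'s last move; but immediately after \Q{} moves it holds the resource, and against a period-$\delta$ opponent past $\phase{\delta}$ the first tick at which a flip becomes available is precisely the one where the counter has climbed to $\delta+1$. Consequently, over the intermediate states $1,\dots,\delta$ every action \Q{} might take is either a wait or a wasteful consecutive move, neither of which resets the counter, so the climb is independent of \Q{}'s choices. Starting from a base of at least $1$ and incrementing by one each tick, the counter therefore attains $\delta+1$ within at most $\delta$ additional time steps.

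I expect the main obstacle to be pinning down exactly this ``no reset before $\delta+1$'' claim, i.e. showing that \Q{} retains control of the resource throughout states $1,\dots,\delta$ so that $\lm{0}\le\lm{1}$ and no flipping move is possible. This rests on the precise arithmetic relating \Q{}'s move time $t$ to the opponent's deterministic schedule $\phase{\delta}+k\delta$ together with the discrete convention for when control transfers; I would settle it with a short case split on the base value at time $t+1$ and the position of $t$ within the current period, both of which are immediate from Figure~\ref{fig:states} and Table~\ref{fig:move-types}. Once that is in place the counting bound is routine.
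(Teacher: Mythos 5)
Your proof is correct and follows essentially the same argument as the paper's: a move at time $t$ (whether flipping or consecutive) lands the state in $\{1,\dots,\delta+1\}$, and from there the state increments by one per tick regardless of \Q{}'s actions, since no flipping move is possible before state $\delta+1$ against a \Per{\delta} opponent. The paper organizes this as an explicit three-way case split on $\state{t}$ (including the initial state $-1$, which your flip case covers implicitly), while you phrase it as a base-plus-invariant argument, but the substance is identical, down to sharing the same glossed-over edge case when a flip lands exactly on state $1$.
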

\begin{proof}
  We want to show that, for any $\state{t}$, with $t>\phase{\delta}$, choosing $\action{t}=1$ means that \Q{} will visit state $\delta+1$ again in at most $\delta$ additional time steps.

  \textit{Case 1}: Assume $1<\state{t}<\delta+1$. We see from Figure \ref{fig:states} that $\state{t+1}=\state{t}+1$, for all $\action{t}$ when $-1<\state{t}<\delta+1$. Therefore, \Q{} will reach state $\delta+1$ in $\delta+1-\state{t}<\delta$ additional time steps.

  \textit{Case 2}: Assume $\state{t}\geq\delta+1$. The opponent is \Per{\delta}, so $\lm{1}<\lm{0}$ at time $t$. Given $\action{t}=1$, Table \ref{fig:move-types} gives $\state{t+1}=t-\lm{0}+1<\delta+1$, returning to Case 1.

  \textit{Case 3}: Assume $\state{t}=-1$ and \Q{} chooses $\action{t}=1$. From Tables \ref{fig:state-types} and \ref{fig:move-types}, we have that $\state{t+1}=t-\lm{0}<\delta+1$, returning again to Case 1.\qed
\end{proof}

\begin{lemma}
  Playing against a \Per{\delta} opponent with $\gamma=0$, $0\leq p< 1$ and $\cost{1}<\delta$, \Q{} visits state $\delta+1$ infinitely often.
  \label{lem:q_reach_delta}
\end{lemma}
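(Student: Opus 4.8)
The plan is to reduce the claim to two facts: (i) \Q{} chooses the move action $\action{t}=1$ at infinitely many time steps (with probability one), and (ii) every move made after the random phase $\phase{\delta}$ forces a visit to state $\delta+1$ within $\delta$ additional steps, which is exactly Lemma~\ref{lem:q_state_lt_delta}. Granting (i), fact (ii) immediately yields that $\delta+1$ is visited infinitely often: since only finitely many time steps satisfy $t\le\phase{\delta}$ and there are infinitely many moves, infinitely many of them occur at times $t>\phase{\delta}$, and each triggers a visit to $\delta+1$. So the entire argument rests on establishing (i).

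To prove (i), I would first note that the hypothesis $\cost{1}<\delta$ guarantees $\cost{1}<\rho$, so \Q{} does not drop out in line~1 of Algorithm~\ref{alg:q-policy} and actually enters the main loop. The key structural observation is that whenever \Q{} plays $\action{t}=0$ (no-play), the state strictly increments, $\state{t+1}=\state{t}+1$, by Table~\ref{fig:move-types}. Hence, if \Q{} were ever to stop moving, its state would grow without bound and it would encounter infinitely many \emph{fresh} states, i.e.\ state values never previously visited. For any fresh state $s$, the $Q$-table still holds its initialized value, so $Q(s,0)=Q(s,1)=0$ (unvisited $(\state{},\action{})$ pairs are never touched by the update in line~10); the tie-breaking branch of Algorithm~\ref{alg:q-policy} then applies, and \Q{} plays with probability $1-p>0$.

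I would then convert this into a contradiction. Let $N$ be the total number of moves \Q{} makes and suppose for contradiction that $P(N<\infty)>0$. On the event $\{N<\infty\}$ there is a last move time $T$; for every $t>T$ the player no-plays, so by the previous paragraph it visits infinitely many fresh states, and at the $n$-th such state it moves with conditional probability exactly $1-p$ irrespective of the history. Since $\sum_n (1-p)=\infty$, the conditional (Lévy) form of the second Borel--Cantelli lemma forces infinitely many moves after $T$ almost surely on this event, contradicting the definition of $T$. Therefore $P(N<\infty)=0$, which establishes (i), and the lemma follows.

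The step I expect to be the main obstacle is the probabilistic heart of (i): making rigorous the intuition that ``a never-before-seen state is a fresh coin flip that plays with probability $1-p$.'' Care is required because the last move time $T$ is itself random, so one cannot simply invoke independence of the play decisions; the clean route is the conditional Borel--Cantelli lemma, exploiting that the move probability at each fresh state is bounded below by the fixed constant $1-p>0$ uniformly in the past (this is where the hypothesis $p<1$ is essential). The remaining ingredients---that unvisited table entries retain their zero initialization and that no-plays increment the state---are routine consequences of the update rule and Table~\ref{fig:move-types}, and notably the argument does not rely on the sign of any reward, only on the tie-breaking rule.
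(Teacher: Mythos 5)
Your argument is built from the same two ingredients as the paper's proof: Lemma~\ref{lem:q_state_lt_delta} to convert any post-phase move into a visit to state $\delta+1$, and the observation that never-visited states retain $Q(s,0)=Q(s,1)=0$, so the tie-breaking rule of Algorithm~\ref{alg:q-policy} makes \Q{} move with probability exactly $1-p>0$ there. The paper packages these as an induction on the number of visits to $\delta+1$ (base case: the first visit occurs; inductive step: after the $n$-th visit, either \Q{} flips in $\{\delta+1,\dots,m\}$ or climbs into fresh states and flips with probability $1-p^z\rightarrow 1$), whereas you package them as a single contradiction on the event $\{N<\infty\}$ closed out by the conditional Borel--Cantelli lemma. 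Your treatment of the random last-move time $T$ is actually more careful than the paper's informal geometric estimates.

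There is, however, a genuine gap. Your key structural claim --- that every no-play strictly increments the observed state --- is false in the observation scheme this lemma is set in. Lemma~\ref{lem:q_state_lt_delta}, Figure~\ref{fig:states}, and the special state $-1$ place the argument in the \oppLM{} scheme, and by Table~\ref{fig:state-types} the observed state equals $\tau_t$ only when $\lm{1}$ exceeds Player 0's first move; otherwise it is $-1$, and a no-play in state $-1$ leaves the state at $-1$. (Table~\ref{fig:move-types} says $\tau_{t+1}=\tau_t+1$, but $\tau$ is not the observation until \Q{} has flipped after the opponent's first move.) Consequently, on the sub-event of $\{N<\infty\}$ where \Q{}'s last move (if any) occurs no later than the opponent's first move, the state stays at $-1$ forever, no fresh states are ever encountered, and your Borel--Cantelli contradiction never gets started: you cannot lower-bound the move probabilities by $1-p$ without separately arguing that $Q(-1,0)=Q(-1,1)$ still holds. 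This is exactly what the paper's base of induction supplies: each visit to state $-1$ is a fresh $1-p$ coin flip because the tie persists there, so \Q{} leaves state $-1$ with probability $1-p^v\rightarrow 1$, and only after that first flip does the trajectory enter the state-incrementing regime your argument relies on. To repair your proof, split $\{N<\infty\}$ according to whether the trajectory is eventually stuck in state $-1$ and run the same conditional Borel--Cantelli argument at state $-1$ using the persistence of the tie. (The paper is itself somewhat cavalier here --- it tacitly assumes no pre-phase move has pushed $Q(-1,1)$ below zero --- but it does treat state $-1$ explicitly, which your write-up omits entirely.)
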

\begin{proof}
  We prove by induction on the number of visits, $n$, to state $\delta+1$ that \Q{} visits state $\delta+1$ infinitely often.

  \textbf{Base of Induction:}
 We show that, starting from $\state{0}=-1$, \Q{} will reach $\state{t}=\delta+1$ with probability converging to 1. If \Q{} chooses $\action{t}=1$ when $\phase{\delta}<t\leq\delta$, this is a \textit{flipping} move, putting \Q{} in state $\state{t+1}<\delta+1$. If not, $t>\delta\geq\phase{\alpha}$. When $\state{t}=-1$, \Q{} chooses $\action{t}=0$ with probability $p$ and $\action{t}=1$ with probability $1-p$. Therefore
  \begin{equation}
    P[\text{\Q{} not flipping after } v\text{ visits to } \state{}=-1]=p^v.
  \end{equation}
  This implies that \Q{} will flip with probability $1-p^v \rightarrow  1$ as $t=v\rightarrow\infty$. By Lemma \ref{lem:q_state_lt_delta}, \Q{} will reach state $\delta+1$ in finite steps with probability 1.

  \textbf{Inductive Step:} Assume \Q{} visits state $\delta+1$ $n$ times. Because we are considering an infinite game, at any time $t$ there are infinitely many states that have not been visited. Therefore $\exists m\geq\delta+1$ such that $\forall s> m,Q(s)=(0,0)$.

  If \Q{} flips at state $\state{t}\in\{\delta+1,...,m\}$, it will reach $\delta+1$ again in a finite number of additional steps by Lemma \ref{lem:q_state_lt_delta}.

  If \Q{} does not flip at state $\state{t}\in\{\delta+1,...,m\}$, we have
  \begin{equation}
    P[\text{\Q{} does not move after } z\text{ steps}]=p^z\label{eqn:noplay0}
  \end{equation} since probability of moving when $Q(\state{})=(0,0)$ is $p$ and not moving implies $\state{t+1}=\state{t}+1>m$. Therefore, as $t=z\rightarrow\infty$, \Q{} will flip again with probability $1-p^z\rightarrow 1$.

  By mathematical induction, we have that state $\delta+1$ is visited infinitely often with probability converging to $1$.\qed
\end{proof}

\begin{lemma}
  If $\gamma=0$ and $\cost{1}<\delta$, \Q{} will eventually choose to move in state $\delta+1$ with probability 1.\label{lem:q_flip}
\end{lemma}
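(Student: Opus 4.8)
The plan is to combine the fact, already established in Lemma~\ref{lem:q_reach_delta}, that state $\delta+1$ is visited infinitely often, with a uniform lower bound on the per-visit probability that \Q{} flips there; then the probability of never flipping vanishes. First I would record the effect of $\gamma=0$: by Equation~(\ref{eqn:estimate}) the future term drops out, so $Q_\alpha(\state{},\action{})$ is simply the running average of the environment rewards seen for the pair $(\state{},\action{})$. In state $\delta+1$ a no-play earns reward $0$, while a flip is a \textit{flipping} move, since against a \Per{\delta} opponent reaching $\delta+1$ forces $\lm{0}>\lm{1}$ (as in Case~2 of Lemma~\ref{lem:q_state_lt_delta}); by Equation~(\ref{eqn:rew}) this earns the constant $\tfrac{\rho-\cost{1}}{c}$, which is positive because $\rho=\delta$ for a periodic opponent and $\cost{1}<\delta$.

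The key step is to show that, until \Q{} flips in state $\delta+1$ for the very first time, its two action-value estimates there remain tied at $0$. Indeed, $Q(\delta+1,0)$ is the average of no-play rewards, all equal to $0$, and $Q(\delta+1,1)$ is frozen at its initial value $0$ for exactly as long as action $1$ has never been selected in that state. Consequently, on every visit to $\delta+1$ preceding the first flip, the tie-breaking branch of Algorithm~\ref{alg:q-policy} (lines~5--6) applies, and \Q{} flips with probability exactly $1-p$, independently of the visit count $v$. It is crucial to note that the decaying exploration factor $\epsilon'=\epsilon e^{-d v}$ in Equation~(\ref{eqn:epsgreedy}) plays no role here, since it governs only the strict-inequality branch, which is never entered while the tie at $0$ holds.

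I would then conclude with a geometric argument. By Lemma~\ref{lem:q_reach_delta}, state $\delta+1$ is visited infinitely often with probability $1$; and by the previous paragraph each such visit, conditioned on no earlier flip in $\delta+1$, is an independent coin toss flipping with probability $1-p>0$. Hence the probability that \Q{} fails to flip in $\delta+1$ over its first $n$ visits is $p^n$, which tends to $0$ as $n\to\infty$, giving the claim. The main obstacle is precisely the observation of the middle paragraph: one must rule out that the flip probability decays to $0$ before any flip occurs, for otherwise the per-visit probabilities could be summable and an eventual flip would not be forced. Showing that the tie at $0$ persists is exactly what pins the flip probability at the constant $1-p$ rather than the decaying $\epsilon'$, so that infinitely many visits almost surely produce a flip.
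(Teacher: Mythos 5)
Your proposal is correct and takes essentially the same route as the paper's proof: both observe that until the first flip in state $\delta+1$ the estimates stay tied at $Q(\delta+1,\cdot)=(0,0)$ (no-play rewards are $0$ and $\gamma=0$ removes the future term), so each visit flips with the tie-breaking probability $1-p$, and Lemma~\ref{lem:q_reach_delta} supplies infinitely many visits, driving the no-flip probability $p^v$ to $0$. Your extra care---noting that $Q(\delta+1,1)$ remains frozen at its initialization before any flip, and that the decaying $\epsilon'$ is irrelevant because only the tie-breaking branch of Algorithm~\ref{alg:q-policy} is ever exercised---merely makes explicit what the paper's terser argument leaves implicit.
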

\begin{proof}
  If \Q{} flips in any visit to state $\delta+1$, the conclusion follows.

  Assume \Q{} does not flip in state $s=\delta+1$.
  Since $\gamma=0$, from Equation (\ref{eqn:v}), $\actualval{t}=0$ for $\action{t}=0$. Therefore $Q(\state{})=(0,0)$ and we have
  \begin{equation}
    P[\Q{} \text{ does not flip after }v\text{ visits to state }\delta+1]=p^v.
  \end{equation}
 By Lemma \ref{lem:q_reach_delta}, we know that \Q{} visits state $\delta+1$ infinitely often. Therefore, probability that \Q{} moves in state $\delta+1$ is $1-p^v\rightarrow 1$ as $v\rightarrow\infty$.\qed
\end{proof}

\subsubsection{Proof of Theorem \ref{thm:opt-per_optimal}} We will now prove the original theorem, using these lemmas.
    To prove that \Q{} plays optimally against \Per{\delta}, we must show that it will eventually (1) play at $\state{}=\delta+1$ at each visit and (2) not play at $\state{}  \not =\delta+1$ at any visit.
    Assuming $\gamma=0$, we have from section 2.5 of~\cite{Sutton} that
    \begin{equation}
        Q_{\alpha+1}(s,a)=Q_\alpha(s,a)+\frac{1}{\alpha+1}\cdot (\rew{\alpha+1}-Q_\alpha(s,a))=\frac{1}{\alpha+1}\sum_{\alpha+1}^{i=1}\rew{i}.\label{eqn:sutton}
    \end{equation}
    Here we denote by $r_i$ the reward obtained the $i$-th time state $s$ was visited and action $a$ was taken. Additionally, \Per{\delta} plays every $\delta$ time steps after the random phase. Therefore we derive from Equation (\ref{eqn:rew}):
    \begin{equation}
        \rew{i} =
          \begin{cases}
            0 & \text{if }\action{i}=0 \\
            -\cost{1} & \text{if } \action{i}=1 \text{ and } 1\leq\state{i}\leq\delta\\
            \frac{\delta-\cost{1}}{c} & \text{if } \action{i}=1 \text{ and } \state{i}\geq\delta+1
         \end{cases}\label{eqn:rew_pf}
      \end{equation}

    By Equations (\ref{eqn:sutton}) and (\ref{eqn:rew_pf}), we have for all states s,
    \begin{equation}
        Q_\alpha(s,0)=\frac{1}{\alpha}\sum^{\alpha}_{i=1}0=0.
    \end{equation}

    First we show that \Q{} will eventually choose $\action{t}=0$ in all states $1<\state{t}<\delta+1$. Consider some \state{t} such that $1<\state{}<\delta+1$. If \Q{} never chooses $\action{t}=1$ in this state, we are done. Assume \Q{} plays at least once in this state, $\alpha>0$, then
    \begin{equation}
        Q_\alpha(s,1)=\frac{1}{\alpha}\sum^{\alpha}_{i=1}-\cost{1}=-\cost{1}<0=Q_\alpha(s,0)
    \end{equation}
    since $\cost{1}>0$. Therefore $\argmax_{a\in\Aspace{}}{Q(s,a)}=0$. Because $\epsilon'=\epsilon\cdot e^{-d\cdot v}$, and $0\leq\epsilon,d<1$, as $v\rightarrow \infty$ we have that $\epsilon'\rightarrow 0$. Therefore $P[\Q{}$ does not play at $\state{}] \rightarrow 1$ for $1\leq\state{}\leq\delta$ as desired.

    Next we show that \Q{} will eventually play at state $\delta+1$ at each visit. From Lemma \ref{lem:q_flip}, we know that \Q{} will play once at $\state{}=\delta+1$ with probability $1$, meaning $\alpha>0$ with probability $1$. By Equations (\ref{eqn:sutton}) and (\ref{eqn:rew_pf}) we have $\text{for } \alpha>0 \text{ and }\state{}=\delta+1.$
    \begin{equation}
        Q_\alpha(s,1)=\frac{1}{\alpha}\sum^{\alpha}_{i=1}\frac{\delta-\cost{\A{}}}{c}=\frac{\delta-\cost{\A{}}}{c}>0=Q_\alpha(s,0).
    \end{equation}
     Now $\argmax_{a\in\Aspace{}}Q(s,a)=1$, so as $\epsilon'\rightarrow 0$, $P[\Q{}$ plays at $\state{}=\delta+1] \rightarrow 1$.

     If \Q{} plays at state $\delta+1$, it will not reach states $s>\delta+1$, and thus cannot play in those states. Therefore, as $t\rightarrow\infty$, $P[\text{\Q{} plays optimally}]\rightarrow1$.

    \qed

\section{\Q{} Against \Per{\delta} and \Exp{\lambda} Opponents}
\label{sec:qexp}

In this section we show experimentally that \Q{} learns an optimal strategy against \Per{\delta} and \Exp{\lambda} opponents. Prior to starting play, we allow \Q{} to choose its observation scheme based on the opponent's strategy. \Q{} chooses the \oppLM{} observation against \Per{\delta} and the \ownLM{} observation against \Exp{\lambda}, but sets other parameters of \Q{} identically. This reflects the theoretical analysis from~\cite{FlipIt} which states that an optimal adaptive strategy against a  \Per{\delta} opponent depends on $\tau$, while $\tau$ is irrelevant in optimal strategies against an \Exp{\lambda} opponent.  Next section, we generalize \Q{} to play with no knowledge of opponent strategy.

\subsection{Implementation}

All simulations (\url{https://github.com/lisaoakley/flipit-simulation}) are written in 
Python 3.5  with a custom OpenAI Gym environment for \flipit{} (\url{https://github.com/lisaoakley/gym-flipit}).  We ran each experiment over a range of costs and Player 0 parameters within the constraint that $\cost{1}<\rho$, as other values have an optimal drop out strategy. For consistency, we calculated $\rho$ from the distribution parameters before running simulations. We report averages across multiple runs, choosing number of runs and run duration to ensure convergence. Integrals are calculated using the \textit{scipy.integrate.quad} function. For Greedy's maximization step we used \textit{scipy.optimize.minimize} with the default ``BFGS'' algorithm. \Q{} can run against a variety of opponents with minimal configuration, thus we set all \Q{} hyper-parameters identically across experiments, namely $\gamma=0.8$, $\epsilon=0.5$, $d=0.05$, $c=5$, and $p=0.7$ unless otherwise noted.

\begin{figure}[t]
  \centering
  \subfloat[][$8,000$ ticks]{\includegraphics[width=.289\textwidth]{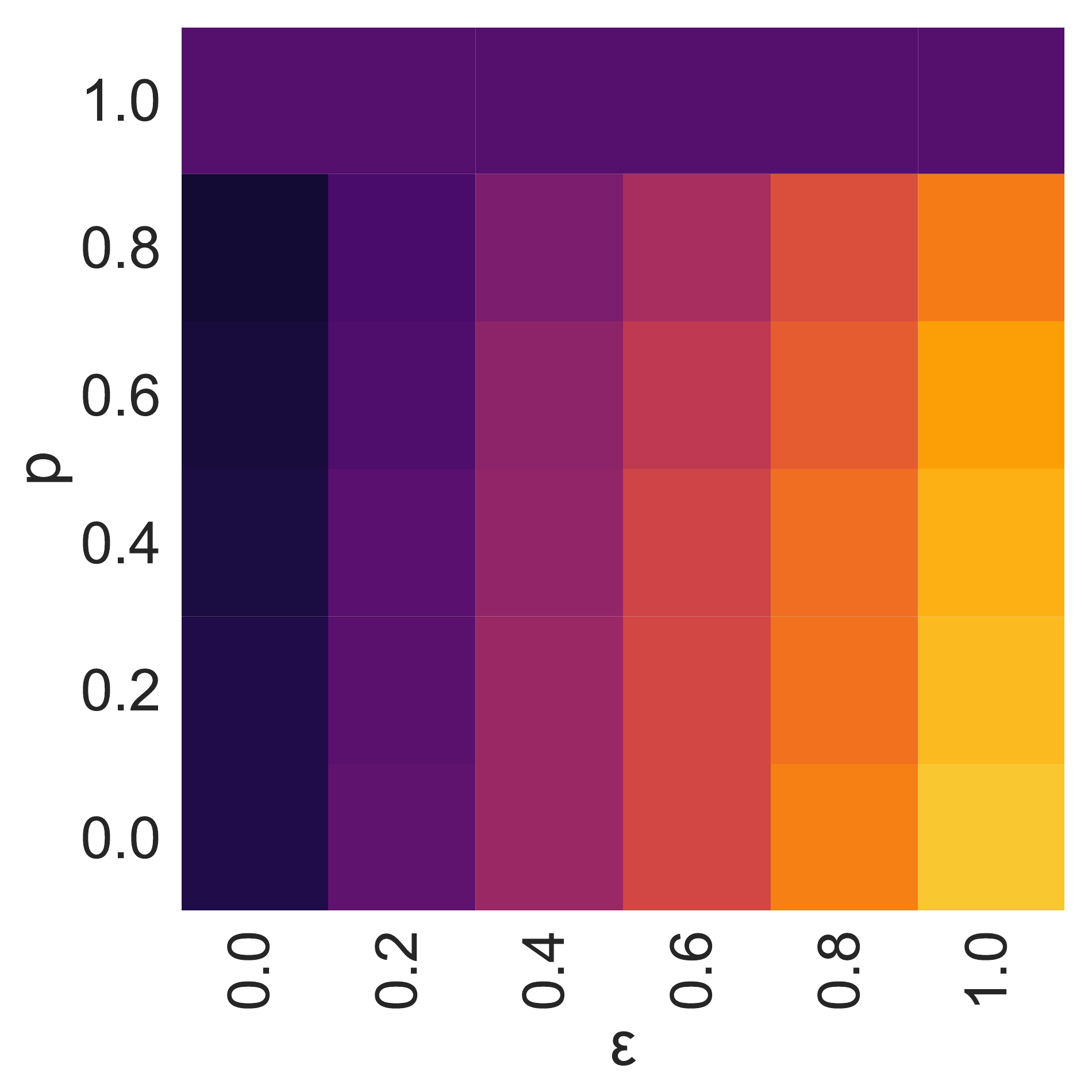}}\hspace{1em}%
  \hfill
  \subfloat[][$16,000$ ticks]{\includegraphics[width=.289\textwidth]{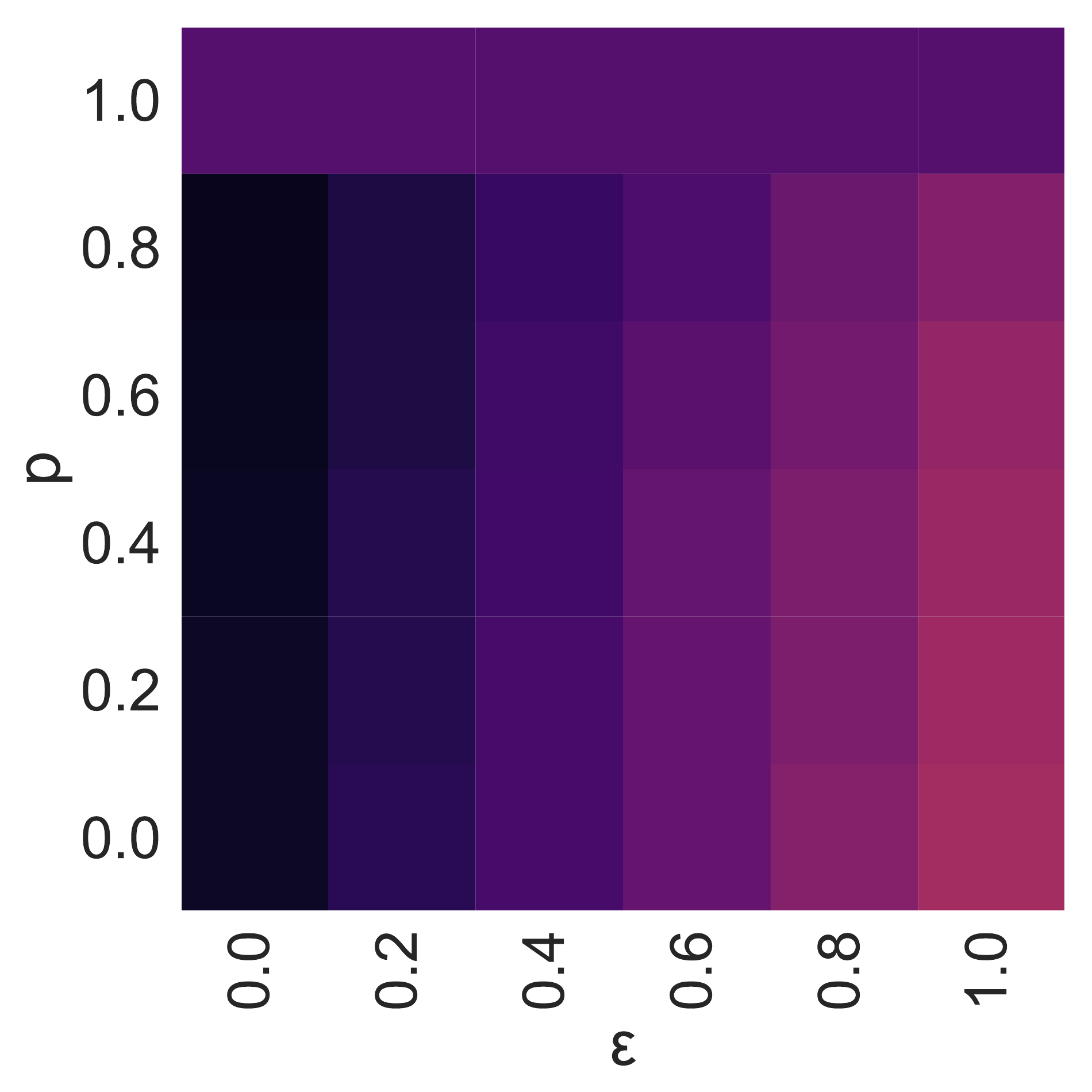}}\hspace{1em}%
  \hfill
  \subfloat[][$64,000$ ticks]{\includegraphics[width=.366\textwidth]{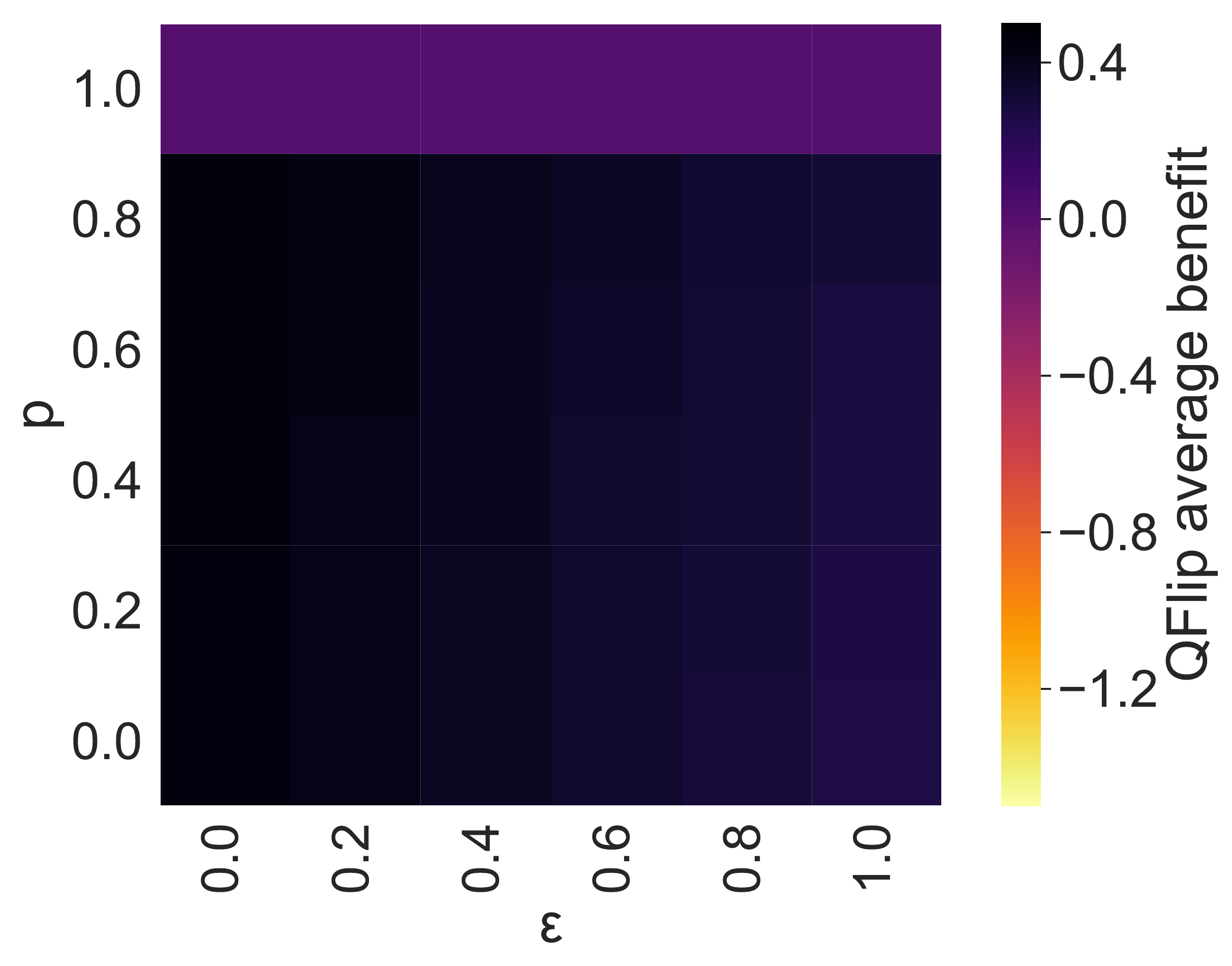}}
  \caption{Player's 1 average benefit for different $\epsilon$ and $p$ parameters at three time ticks.  Here \Q{} plays with \oppLM{}, $\gamma=0$ and $\cost{1}=25$ against a \Per{\delta} opponent with $\delta=50$, averaged over $10$ runs. Darker purples mean higher average benefit. \Q{} converges to optimal ($.48$), improving more quickly with low exploration.}
  \label{fig:tick_heat}
\end{figure}

\subsection{\Q{} vs. Periodic}\label{sec:per_experiment}
In Section \ref{sec:qperiodic} we proved that \Q{} will eventually play optimally against \Per{\delta} when future discount $\gamma$ is $0$ and $\cost{1}<\delta$. In this section, we show experimentally that most configurations of $\Q{}$ using $\oppLM{}$ quickly lead to an optimal strategy. Additionally, we show there is little difference in benefit when $\gamma>0$.

\myparagraph{Learning Speed} When $\gamma=0$, Equation (\ref{eqn:v}) reduces to $\actualval{t}=\rew{t+1}$. In this case we verify that \Q{} learns an optimal strategy for all exploration parameters, but that lower exploration rates cause \Q{} to reach optimal benefit more quickly. Against a Periodic opponent, $Q_\alpha(s,a)$ is constant after \Q{} takes action \action{} in state \state{} at least once ($\alpha>0$). Thus, exploring leads to erroneous moves. When the probability of moving for the first time in state $s$ is low ($p$ is high), \Q{} makes fewer costly incorrect moves in states $s<\delta+1$ leading to higher benefit. When $p=1$, \Q{} never plays and $\beta_1=0$.  Figure \ref{fig:tick_heat} displays Player 1's average benefit for different values of $p$ and $\epsilon$. \Q{} achieves close to optimal benefit after $64,000$ ticks with high exploration rates $\epsilon$ and high probability of playing in new states ($1-p$), and in as little as $8,000$ ticks with no exploration ($\epsilon=0$) and low $1-p$.


\myparagraph{Varied Configurations} When $\gamma>0$, \actualval{t} factors in the estimated value of state \state{t+1} allowing \Q{} to attain positive reward for choosing not to move in state $\delta+1$. The resulting values in the Q table can negatively impact learning. Figure \ref{fig:expl_gamma} shows the average benefit over time  (left) and the number of non-optimal runs (right) for different $\epsilon$ and $\gamma$ values. We observe that \Q{} performs non-optimally on $38\%$ of runs with $\gamma=0.8$ and $\epsilon=0$  but has low benefit variation between runs. However, increasing $\epsilon$ even to $0.1$ compensates for this and allows \Q{} to play comparably on average with future estimated value ($\gamma>0$) as with no future estimated value ($\gamma=0$). This result allows us flexibility in configuring \Q{} which we will leverage to maintain hyper-parameter consistency against all opponents. In the rest of the paper we set $\gamma=0.8$, $\epsilon=0.5$, and $p=0.7$.

\begin{figure}[t]
  \centering
    \subfloat[width=.6\textwidth][Learning over time averaged over 50 runs per configuration. Optimal benefit vs. \Per{\delta} with $\delta=50$ and $\cost{1}=25$ in a discrete game is $0.48$]{\includegraphics[width=.55\textwidth]{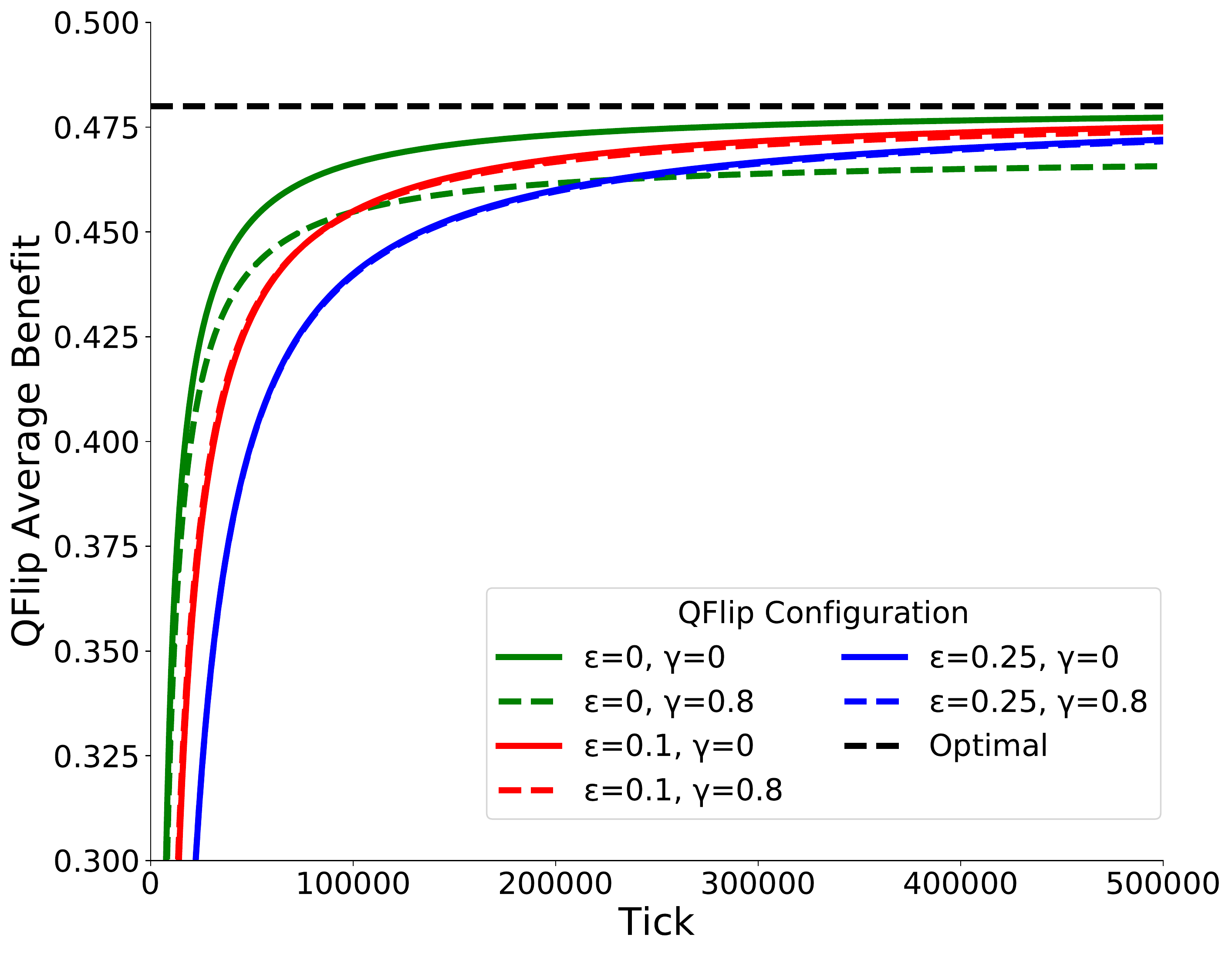}}
    \qquad
    \subfloat[Statistics over 50 runs. ``Non-optimal" runs have average benefit $>.02$ less than optimal (.48) after 500,000 ticks.]{\adjustbox{width=0.38\columnwidth,valign=B,raise=2\baselineskip}{%
    \begin{tabular}{|c|c|c|c|c|}

      \hline
      $\gamma$             & $\epsilon$ & \begin{tabular}[c]{@{}c@{}}\# \\ non-optimal \\ runs\end{tabular} & \begin{tabular}[c]{@{}c@{}}min \\ benefit\end{tabular} & \begin{tabular}[c]{@{}c@{}}max \\ benefit\end{tabular} \\ \hline
      \multirow{4}{*}{0}   & 0          & 0                                                                 & 0.477                                                      & 0.478                                                      \\ \cline{2-5}
                           & 0.1        & 0                                                                 & 0.474                                                      & 0.476                                                      \\ \cline{2-5}
                           & 0.25       & 0                                                                 & 0.471                                                      & 0.473                                                      \\ \cline{2-5}
                           & 0.5        & 0                                                                 & 0.465                                                      & 0.468                                                      \\ \hline
      \multirow{4}{*}{0.8} & 0          & 19                                                                 & 0.417                                                      & 0.478                                                      \\ \cline{2-5}
                           & 0.1        & 3                                                                 & 0.455                                                      & 0.476                                                      \\ \cline{2-5}
                           & 0.25       & 1                                                                 & 0.452                                                      & 0.473                                                      \\ \cline{2-5}
                           & 0.5        & 0                                                                 & 0.465                                                      & 0.468                                                      \\ \hline
      \end{tabular}
              }}
    \caption{\Q{} using \oppLM{} with $\cost{1}=25$ playing against \Per{\delta} with $\delta=50$.}
    \label{fig:expl_gamma}
\end{figure}


\myparagraph{Comparison to Greedy}
Assuming the Greedy strategy against $\Per{\delta}$ plays first at time $\delta$, it will play optimally with probability $1-\cost{1}/\delta$. However, with probability $\cost{1}/\delta$, Greedy will drop out after its first adaptive move~\cite{FlipIt}. We compare \Q{} and Greedy against \Per{\delta} for $\delta=50$ across many costs in Figure \ref{fig:greedy_vs_q}. \Q{} consistently achieves better average benefit across runs, playing close to optimally on average. Additionally, Player 0 with $\cost{0}=1$ attains more benefit on average against a Greedy opponent as a result of these erroneous drop-outs. For $\cost{1}<45$, \Q{} attains benefit between 5\% and 50\% better than Greedy on average.

\begin{figure}[t]
    \centering
    \subfloat[width=.45\textwidth][Player 1 average benefit by cost]{\includegraphics[width=.45\textwidth]{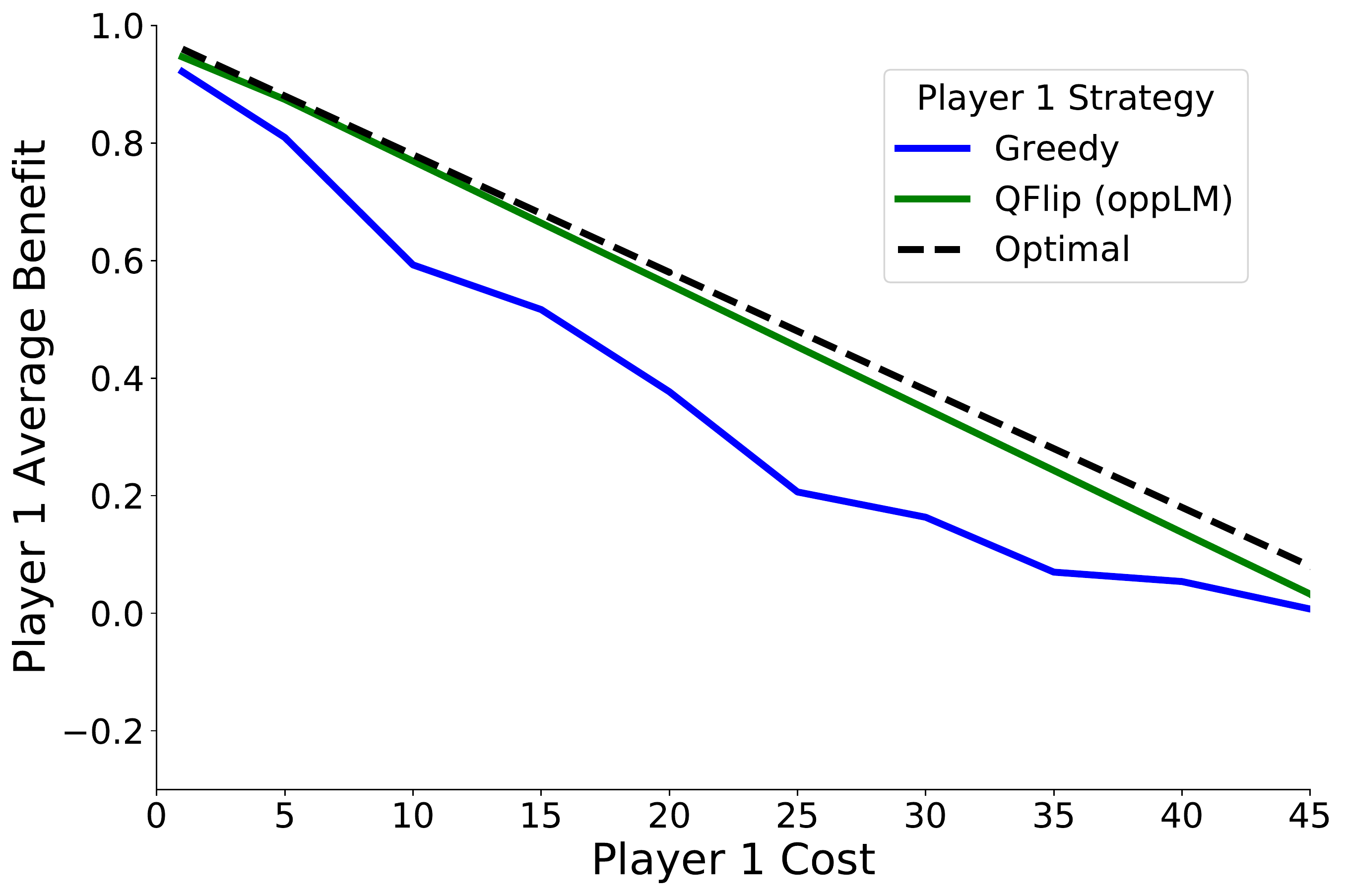}}
    \qquad
    \subfloat[width=.45\textwidth][Player 0 average benefit by cost]{\includegraphics[width=.45\textwidth]{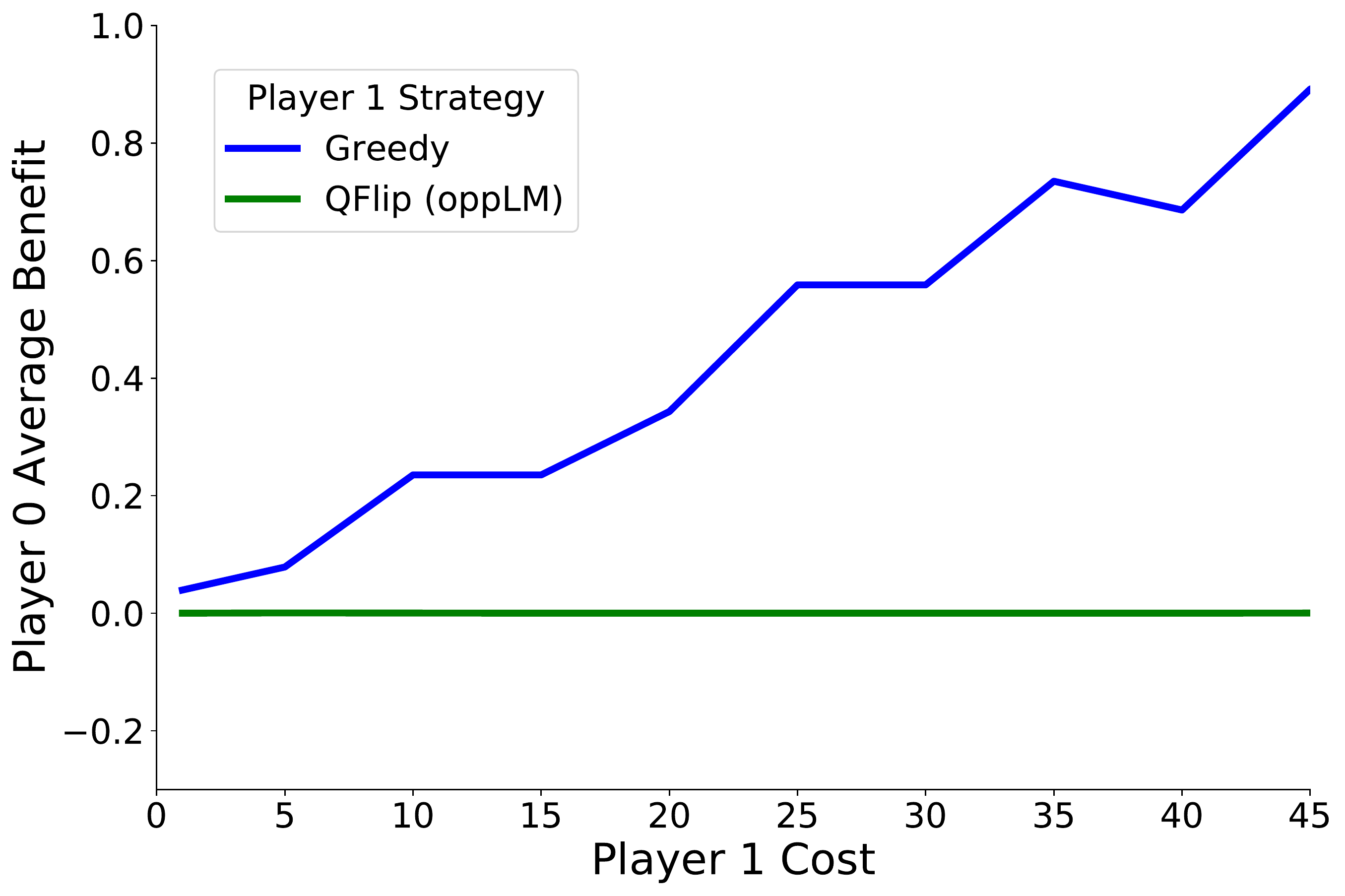}}
    \caption{Player 1 and Player 0's average benefit for \Q{} and Greedy across Player 1 costs. \Q{} with \oppLM{} playing against \Per{\delta} with fixed $\cost{0}=1$ and $\delta=50$ for 250,000 ticks, averaged over 100 runs.}
    \label{fig:greedy_vs_q}
\end{figure}


\subsection{\Q{} vs. Exponential}
The optimal \LM{} strategy against an \Exp{\lambda} opponent is proven in~\cite{FlipIt} to be \Per{\delta} with $\delta$ dependent on $\cost{0}$ and $\lambda$. The exponential distribution is memoryless, so optimal $\delta$ is independent of time since the \textit{opponent's} last move. Optimal \Q{} ignores $\tau$ and moves $\delta$ steps after its \textit{own} last move. \Q{} therefore prefers the \ownLM{} observation space from Table \ref{fig:state-types}, rather than \oppLM{} used against Periodic.


\begin{figure}[tbh]
  \subfloat[][\Q{} average benefit versus cost. \Q{} becomes near-optimal for all costs when duration is high.]
  {\includegraphics [width=.47\textwidth] {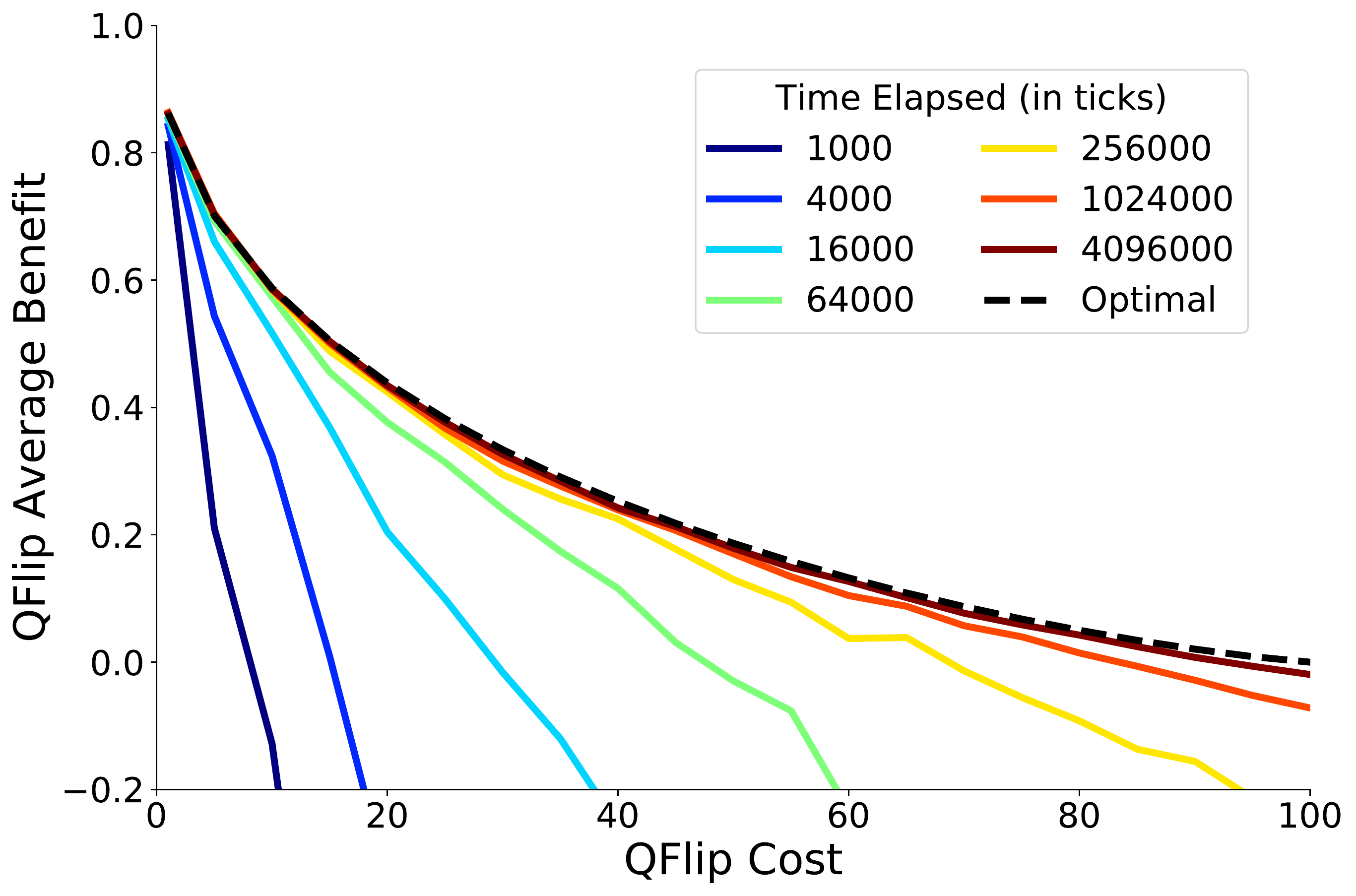}}
  \qquad
  \subfloat[][Ratio of \Q{} and Optimal average benefit versus time. \Q{} achieves optimal benefit quickly when costs are low. ]
  {\includegraphics [width=.47\textwidth] {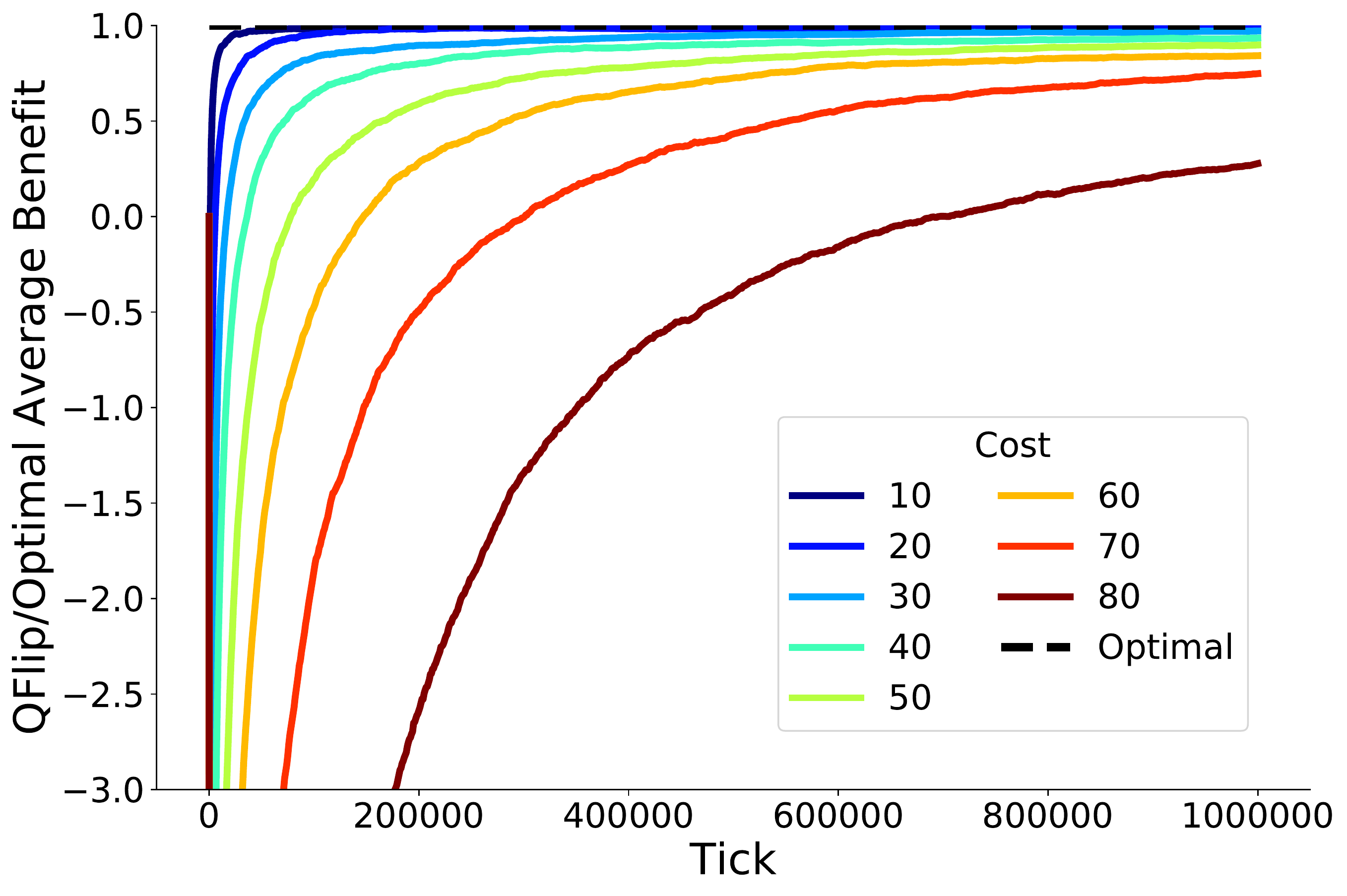}}
  \caption{\Q{} with \ownLM{} playing against \Exp{\lambda} with $\lambda=1/100$. }\label{fig:expected_dur}
\end{figure}

For \Q{} to learn any \Per{\delta} strategy, it must visit states $s<\delta$ many times. When playing against an Exponential opponent, the optimal $\delta$ grows quickly as $\cost{1}$ increases. For instance, against an \Exp{\lambda} opponent with $\lambda=1/100$ the optimal $\Per{\delta}$ strategy is $\delta=53$ for $\cost{1}=10$ and $\delta=389$ for $\cost{1}=90$~\cite{FlipIt}. 
As a result, the optimal Periodic strategy takes longer to learn as $\cost{1}$ grows.
 Figure \ref{fig:expected_dur} (a)  shows the average benefit versus cost after running the algorithm for up to 4.096 million ticks for rate of the Exponential distribution $\lambda=1/100$. For small costs, \Q{} learns to play optimally within a very short time ($16,000$ ticks). As the move cost increases, \Q{} naturally takes longer to converge. We verified this for other values of $\lambda$ as well.  Figure \ref{fig:expected_dur} (b) shows how the benefit varies by time for various move costs.  Given enough time, \Q{} converges to a near-optimal Periodic strategy for all costs (even as high as $\cost{1}=100$, which results in drop out for $\lambda=1/100$).

\section{Generalized \Q{} Strategy}
\label{sec:general}

Previous sections show that \Q{} converges to optimal using the \oppLM{} and \ownLM{} observation schemes for the \Per{\delta} and \Exp{\lambda} opponents respectively. In this section we show that \Q{} using a \composite{} observation scheme can play optimally against \Per{\delta} and \Exp{\lambda}, and perform well against other Renewal strategies without any knowledge of the opponent's strategy. The \composite{} strategy uses as states both Player 1's own last move time ($\lm{1}$), and the time since the opponent's last \textit{known} move ($\tau_t$), as described in Table~\ref{fig:state-types}. Composite \Q{} is the first general adaptive \flipit\ strategy that has no prior information on the opponent.


\subsection{Composite \Q{} Against \Per{\delta} and \Exp{\lambda} Opponents}
Figure \ref{fig:composite1} shows that \Q{}'s average benefit eventually converges to optimal against both \Exp{\lambda} and \Per{\delta} when using a \composite{} observation scheme. We note that it takes significantly longer to converge to optimal when using the \composite{} scheme. This is natural, as $\Q{}$ has an enlarged state space (quadratic compared to \oppLM{} and \ownLM{} observation schemes) and now visits each state less frequently. We leave approximation methods to expedite learning to future work.

\begin{figure}[tb]
    \subfloat[][\Q{} against \Per{\delta}with $\delta=50$.]
    {\includegraphics [width=.47\textwidth] {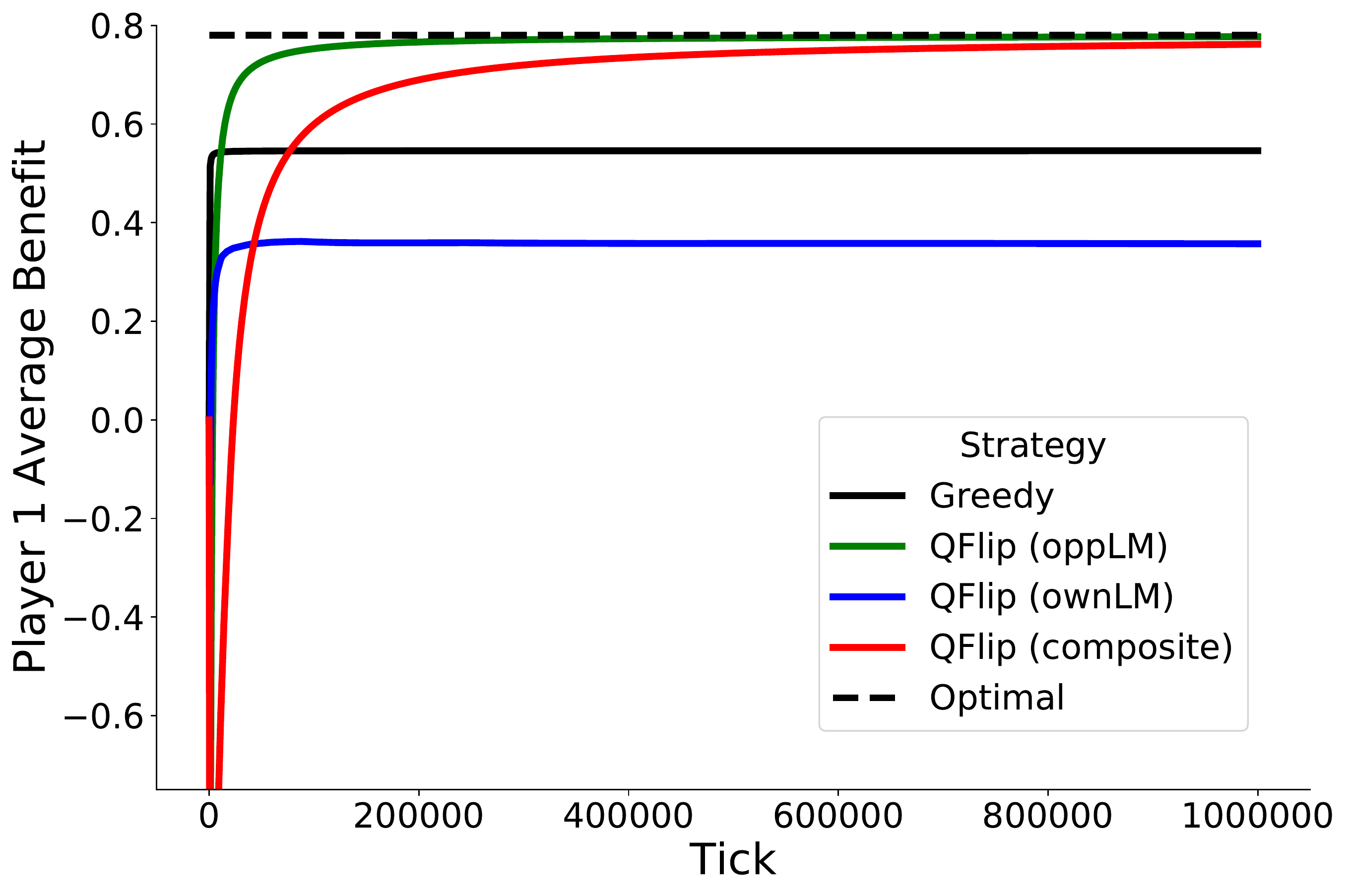}}
    \qquad
    \subfloat[][\Q{} against \Exp{\lambda} with $\lambda=1/100$.]
    {\includegraphics [width=.47\textwidth] {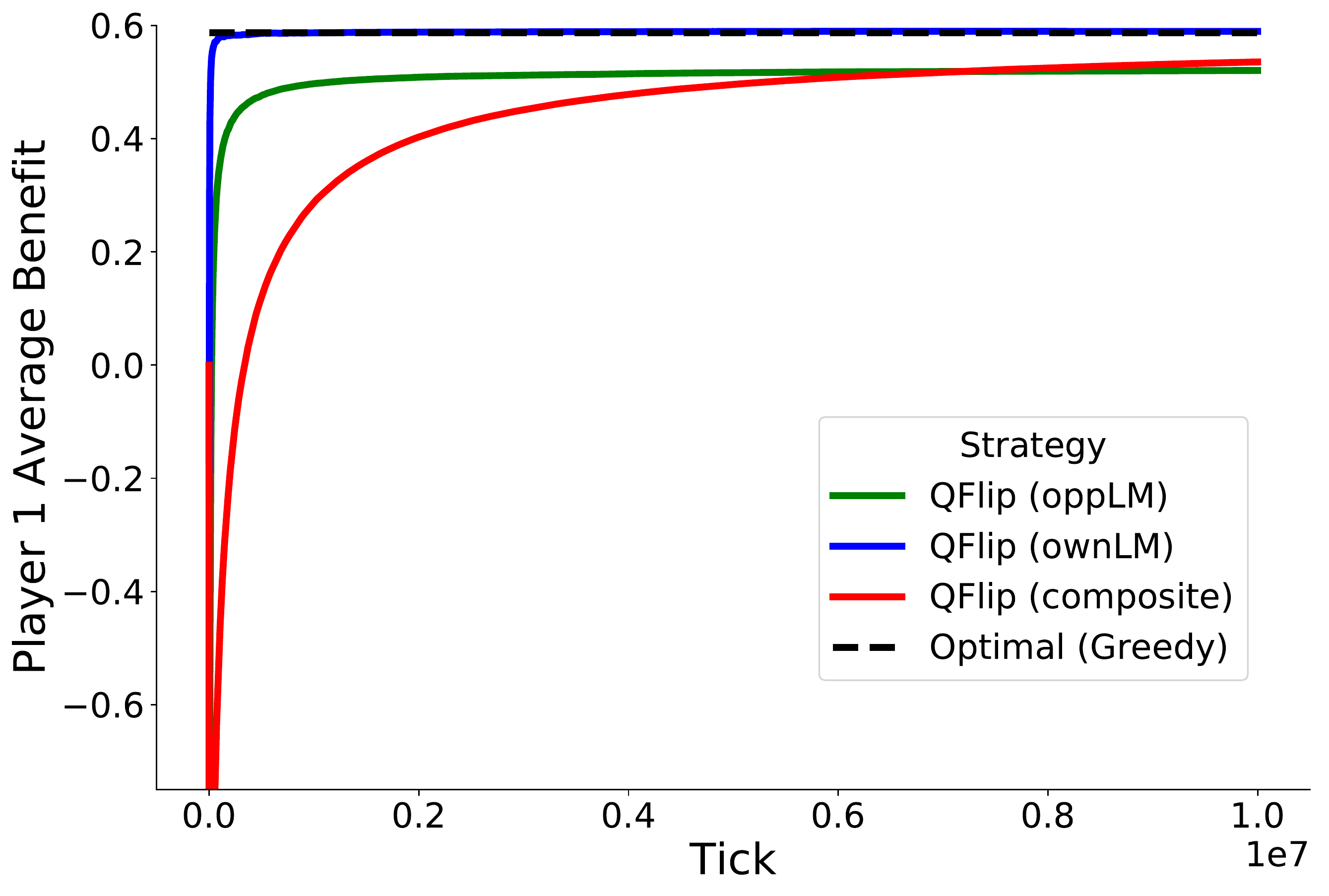}}
    \caption{\Q{} with $\cost{1}=10$ averaged over 10 runs. \Q{} converges to optimal against \Per{\delta} and  \Exp{\lambda} using \oppLM{} and \ownLM{} observation schemes respectively, and plays close to optimally against both with \composite{} observation scheme.}\label{fig:composite1}
\end{figure}
\begin{figure}[tb]

    \subfloat[][\Q{} vs. \Uni{\delta}{u} with $\delta=100$, $u=50$]
    {\includegraphics [width=.47\textwidth] {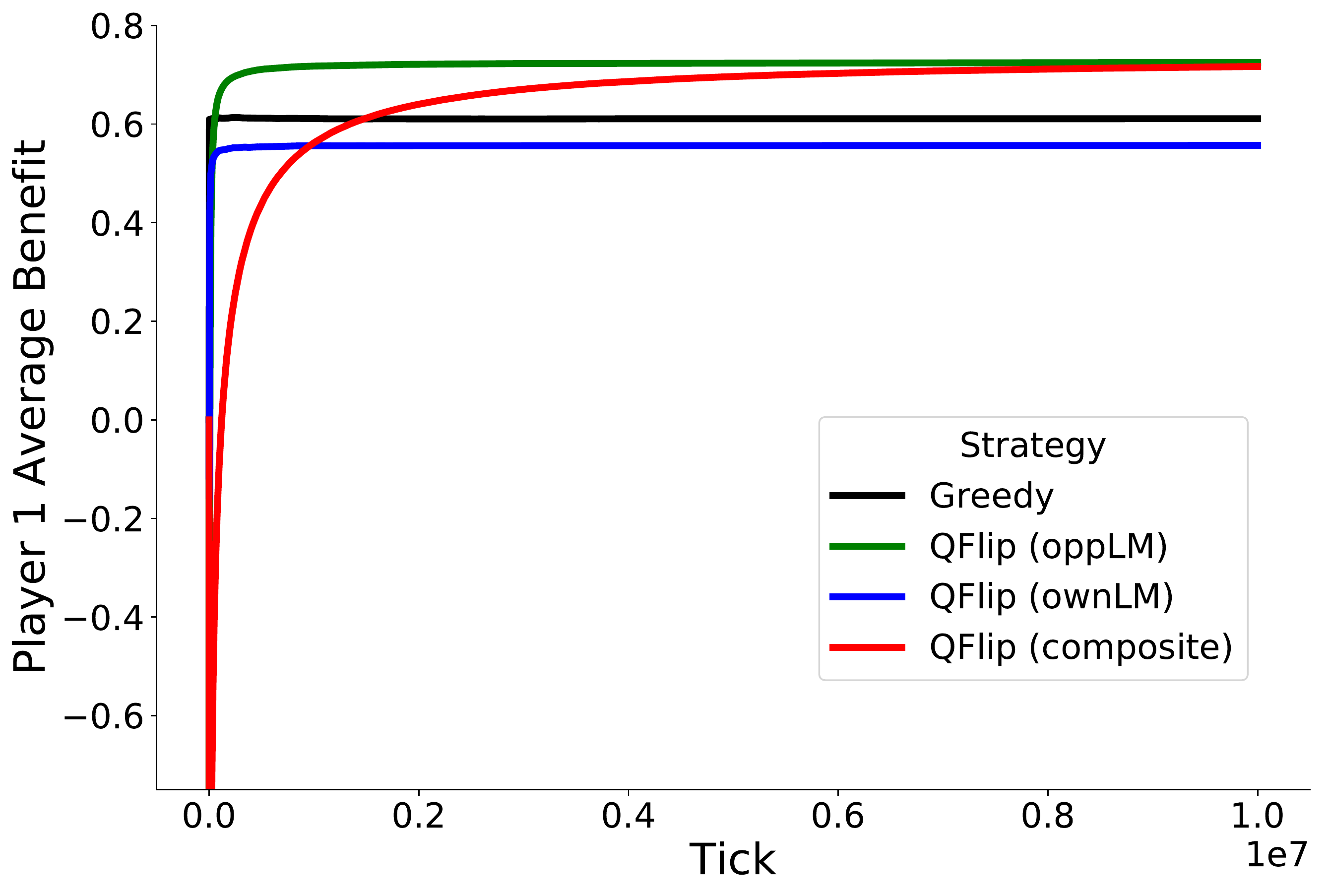}}
    \qquad
    \subfloat[][\Q{} vs. \Norm{\mu}{\sigma} with $\mu=100$, $\sigma=10$]
    {\includegraphics [width=.47\textwidth] {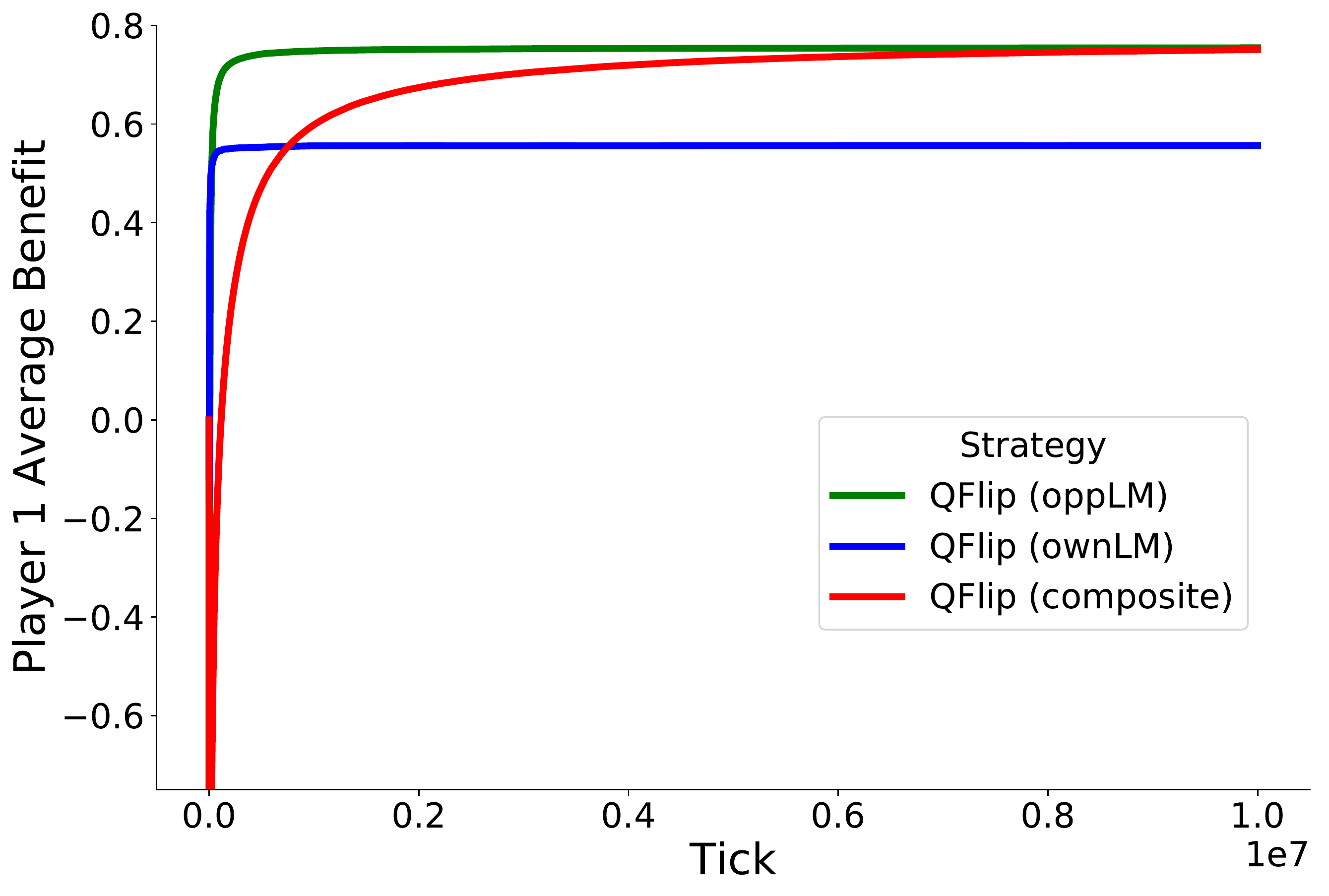}}

    \caption{\Q{}'s average benefit by time for Uniform (left) and Normal (right) distributions with $\cost{1}=10$. \Q{} with \oppLM{} and \composite{} observations outperforms Greedy against \Uni{\delta}{u} averaged over 10 runs. Against both opponents, \composite{} converges to \oppLM{} as time increases. }
    \label{fig:uniform}
\end{figure}
\subsection{Composite \Q{} Against Other Renewal Opponents}

The \composite{} strategy results in flexibility against multiple opponents. We evaluate \Q{} using \composite{} observations against Uniform and Normal Renewal opponents in Figure \ref{fig:uniform}. \Q{} attains 15\%  better average benefit than Greedy against \Uni{\delta}{u}. Figure \ref{fig:uniform} also shows that \Q{} attains a high average benefit of $0.76$ against a \Norm{\mu}{\sigma} opponent. We do not compare \Norm{\mu}{\sigma} to Greedy as the numerical packages we used were unable to find the maximum local benefit from Equation (\ref{eqn:localben}). \Q{} using \composite{} attains average benefit within 0.01 of \Q{} using \oppLM{} (best performing observation scheme) against both opponents after 10 million ticks.

\section{Conclusions}
\label{sec:conclusions}


We considered the problem of playing adaptively in the \flipit\ security game by designing \Q{}, a novel strategy based on temporal difference Q-Learning instantiated with three different observations schemes. We showed theoretically that \Q{} plays optimally against a Periodic Renewal opponent using the \oppLM{} observation. We also confirmed experimentally that \Q{} converges against Periodic and Exponential opponents, using the \ownLM{} observation scheme in the Exponential case. Finally, we showed general \Q{} with a \composite{} observation scheme performs well against Periodic, Exponential, Uniform, and Normal Renewal opponents. Generalized \Q{} is the first adaptive strategy which can play against any opponent with no prior knowledge.

We performed detailed experimental evaluation of our three observation \\ schemes for a range of distributions parameters and move costs. Interestingly, we showed that certain hyper-parameter configurations for the amount of exploration ($\epsilon$ and $d$), future reward discount ($\gamma$), and probability of moving in new states ($1-p$) are applicable against a range of Renewal strategies. Thus, \Q{} has the advantage of requiring minimal configuration. Additionally, we released an OpenAI Gym environment for \flipit{} to aid future researchers.


In future work, we plan to consider extensions of the \flipit\ game, such as multiple resources and different types of moves. We are interested in analyzing other non-adaptive strategies besides the class of Renewal strategies. Finally, approximation methods from reinforcement learning have the potential to make our composite strategy faster to converge and we plan to explore them in depth.

\section*{Acknowledgements}

We would like to thank Ronald Rivest, Marten van Dijk, Ari Juels, and Sang Chin for discussions about reinforcement learning in \flipit. We thank Matthew Jagielski, Tina Eliassi-Rad, and Lucianna Kiffer for discussing the theoretical analysis. This project was funded by NSF under grant CNS-1717634. This research was also sponsored by the U.S. Army Combat Capabilities Development Command Army Research Laboratory and was accomplished under Cooperative Agreement Number W911NF-13-2-0045 (ARL Cyber Security CRA). The views and conclusions contained in this document are those of the authors and should not be interpreted as representing the official policies, either expressed or implied, of the Combat Capabilities Development Command Army Research Laboratory or the U.S. Government. The U.S. Government is authorized to reproduce and distribute reprints for Government purposes notwithstanding any copyright notation here on.


%
%
\bibliographystyle{splncs04}
\bibliography{refs}
\ignore{

}

\end{document}